\newcommand{\tr}{\mathrm{tr}}
\newtheorem{proposition}{Proposition}
\newtheorem{definition}{Definition}
\newtheorem{theorem}{Theorem}
\newtheorem{lemma}{Lemma}
\DeclareMathOperator*{\argmin}{argmin}
\title{Efficient Hamiltonian-aware Quantum Natural Gradient Descent for Variational Quantum Eigensolvers}
\author{
  Chenyu Shi\\
  Applied Quantum Algorithms Leiden, Leiden University\\
  Leiden Institute of Advanced Computer Science, Leiden University\\
  \texttt{c.shi@liacs.leidenuniv.nl}\\
  \And
 Hao Wang\\
  Applied Quantum Algorithms Leiden, Leiden University\\
  Leiden Institute of Advanced Computer Science, Leiden University\\
  \texttt{h.wang@liacs.leidenuniv.nl}\\
}
\begin{document}
\maketitle

\begin{abstract}
The Variational Quantum Eigensolver (VQE) is one of the most promising algorithms for current quantum devices. It employs a classical optimizer to iteratively update the parameters of a variational quantum circuit in order to search for the ground state of a given Hamiltonian. The efficacy of VQEs largely depends on the optimizer employed. Recent studies suggest that Quantum Natural Gradient Descent (QNG) can achieve faster convergence than vanilla gradient descent (VG), but at the cost of additional quantum resources to estimate Fubini-Study metric tensor in each optimization step. The Fubini-Study metric tensor used in QNG is related to the entire quantum state space and does not incorporate information about the target Hamiltonian. To take advantage of the structure of the Hamiltonian and address the limitation of additional computational cost in QNG, we propose Hamiltonian-aware Quantum Natural Gradient Descent (H-QNG). In H-QNG, we propose to use the Riemannian pullback metric induced from the lower-dimensional subspace spanned by the Hamiltonian terms onto the parameter space. We show that H-QNG inherits the desirable features of both approaches: the low quantum computational cost of VG and the reparameterization-invariance of QNG. We also validate its performance through numerical experiments on molecular Hamiltonians, showing that H-QNG achieves faster convergence to chemical accuracy while requiring fewer quantum computational resources.
\end{abstract}

\section{Introduction}
The Variational Quantum Eigensolver (VQE) \cite{peruzzo2014variational} is a hybrid quantum–classical algorithm well-suited for current noisy intermediate-scale quantum devices \cite{preskill2018quantum}. It is a variational quantum algorithm \cite{cerezo2021variational} specifically designed to search for the ground state of a given Hamiltonian. VQEs have attracted wide attention across different research fields, such as quantum chemistry \cite{lim2024fragment, de2024evaluating}, quantum many-body problems \cite{gyawali2022adaptive, weaving2025simulating}, and materials engineering \cite{gujarati2023quantum, huang2022simulating}, and is believed to hold potential for quantum advantage over classical methods \cite{peruzzo2014variational, kandala2017hardware}. The basic idea of VQEs is to employ a classical optimizer to iteratively update the parameters of a parametrized quantum circuit in order to minimize the expectation value of the Hamiltonian, where the expectation value is evaluated via the output of the parameterized quantum circuit. The optimal parameters obtained by this minimization correspond to an approximate ground state of the given Hamiltonian. 

Despite its promising prospects across different fields, one of the main limitations of VQEs lies in their trainability. Recent studies have shown that VQEs can suffer from serious training issues \cite{mcclean2018barren, larocca2025barren}. Finding a suitable optimizer is believed to be crucial for revealing the potential quantum advantages of VQEs. The basic optimization method is gradient descent, which we refer to as vanilla gradient descent in this paper. Many improved gradient-based methods have been proposed for VQEs \cite{sweke2020stochastic, jones2025benchmarking, stokes2020quantum}. Among these methods, Quantum Natural Gradient Descent (QNG) has emerged as a promising approach \cite{stokes2020quantum} as the quantum analog of classical natural gradient descent \cite{amari1998natural}. QNG leverages the Fubini-Study metric tensor to modify the gradient in each optimization step to capture the geometric information of quantum states. 

QNG is believed to offer several advantages over vanilla gradient descent. First, like its classical counterpart, QNG is reparameterization-invariant \cite{stokes2020quantum, martens2020new}, whereas vanilla gradient descent generally is not. Reparameterization invariance is considered a desirable property of optimization methods, as it enhances the efficiency of the optimization process \cite{song2018accelerating}. In addition, QNG is potentially helpful for avoiding local minima—one of the most significant issues affecting the trainability of VQEs \cite{wierichs2020avoiding}. Moreover, several experimental studies have shown that QNG achieves faster convergence speed than vanilla gradient descent \cite{stokes2020quantum, koczor2022quantum, yamamoto2019natural}.

However, despite its advantages, QNG requires additional quantum computational cost compared to vanilla gradient descent. In each optimization step of QNG, the Fubini-Study metric must be experimentally estimated on hardware \cite{koczor2022quantum, haug2022natural}. For a general parameterized circuit in VQEs with $m$ parameters and a target Hamiltonian $H=\sum_{r=1}^v a_r P_r$ consisting of $v$ Pauli terms, the quantum computational cost is $\operatorname{O}(mv+m^2)$ in terms of sampling complexity for each optimization step. The $mv$ term arises from computing the gradient with respect to each parameter, while the $m^2$ term corresponds to evaluating the metric tensor. By contrast, each vanilla gradient descent step only requires $\operatorname{O}(mv)$ quantum computational cost for the gradient. Moreover, estimating the Fubini-Study metric tensor also requires additional circuits for the Hadamard test \cite{koczor2022quantum, koczor2022quantum2}. This raises an interesting question: Is it possible to construct an optimization method that requires the same quantum computational cost as VG, yet achieves performance comparable to or even better than QNG?

The inspiration for our method comes from the observation that, in the general case, an $n$-qubit target Hamiltonian $H=\sum_{r=1}^v a_r P_r$ does not contain all $4^n$ Pauli strings and is therefore associated only with a non-trivial lower-dimensional subspace spanned by each term $P_r$. For example, a one-qubit Hamiltonian $H=-X-Y$ is independent of the $z$-axis, and its expectation value depends only on the coordinates in the $xy$-plane. Hence, $H$ is related only to the unit disc in the $xy$-plane, rather than to the entire Bloch sphere. An illustration is provided in \Cref{fig-submanifold}.

Motivated by this observation, we propose considering the metric tensor used in optimization directly induced from this non-trivial lower-dimensional space. Accordingly, our method employs the Riemannian pullback metric induced from this subspace spanned by Hamiltonian terms onto the parameter space (see \Cref{df-hqng}), instead of using the Fubini-Study metric tensor defined over the entire quantum state space. Since the metric tensor in our method incorporates information from the target Hamiltonian, we refer to our approach as \textit{Hamiltonian-aware Quantum Natural Gradient Descent} (H-QNG).

In this work, we show that H-QNG requires the same quantum computational cost $\operatorname{O}(mv)$ as vanilla gradient descent. Moreover, we prove that H-QNG preserves the reparameterization-invariant property of standard QNG. Besides, by incorporating the structure of the Hamiltonian into the metric tensor, H-QNG can potentially achieve faster convergence compared to standard QNG. We experimentally display this improved convergence speed performance in this work. All experiments in this work are performed on the classical simulator, where exact quantities can be tracked noiselessly. Particularly, we also discuss the impact of shot noise through additional experiments in \Cref{NOISE}.

The remainder of the paper is structured as follows. \Cref{Pre} briefly introduces the background knowledge required for this work. \Cref{MF} formulates H-QNG and reveals its differences and connections to standard QNG. \Cref{MA} analyzes the key properties of H-QNG and provides experimental application examples to display its performance. Finally, \Cref{DC} concludes the paper.

\section{Preliminaries} \label{Pre}

\subsection{Variational Quantum Eigensolver}
The Variational Quantum Eigensolver (VQE) \cite{peruzzo2014variational} is a hybrid quantum–classical algorithm designed to search for the ground state $\rho_G$ of a given target Hamiltonian $H$. The general workflow of VQEs is shown in \Cref{fig-vqe}. In VQEs, a parameterized quantum circuit is used to prepare a variational state $\rho_{\theta}=U(\theta)\ket{0}\bra{0}U^{\dag}(\theta)$ where $\theta \in \mathbb{R}^m$. Given a target Hamiltonian $H$, the expectation value $f(\theta)=\tr(\rho_{\theta}H)$ is estimated on quantum devices. The ground state $\rho_G$ is the lowest-energy state of the target Hamiltonian $H$. Hence, searching for $\rho_G$ is equivalent to minimizing the expectation value function $f(\theta)$ by adjusting the tunable parameters $\theta$.

Considering the expectation value $f(\theta)$ as the cost function, we can apply a classical optimizer to iteratively update the parameters $\theta$, while evaluating the required quantities on the quantum device. The most common yet simple method is gradient descent (which we refer to as vanilla gradient descent in this paper to distinguish it from other gradient-based methods). In vanilla gradient descent, the parameter update rule is given by:    

\begin{equation}
    \theta^{(k+1)}=\theta^{(k)}-\eta\nabla f(\theta^{(k)})
    \label{eq-vanilla}
\end{equation}

where $\nabla f(\theta^{(k)})$ denotes the gradient of the cost function (expectation value), and $\eta$ is the learning rate. In each optimization step, the gradients are evaluated on quantum devices using techniques such as finite differences or the parameter-shift rule \cite{mitarai2018quantum,schuld2019evaluating}. Beyond vanilla gradient descent, many gradient-based methods have been proposed and evaluated for VQEs \cite{jones2025benchmarking}. With a good optimizer and an expressive circuit, VQEs is expected to obtain a good approximation of the ground state $\rho_G$ after sufficient optimization.

\begin{figure}[h]
    \centering
    \includegraphics[width=0.6\textwidth]{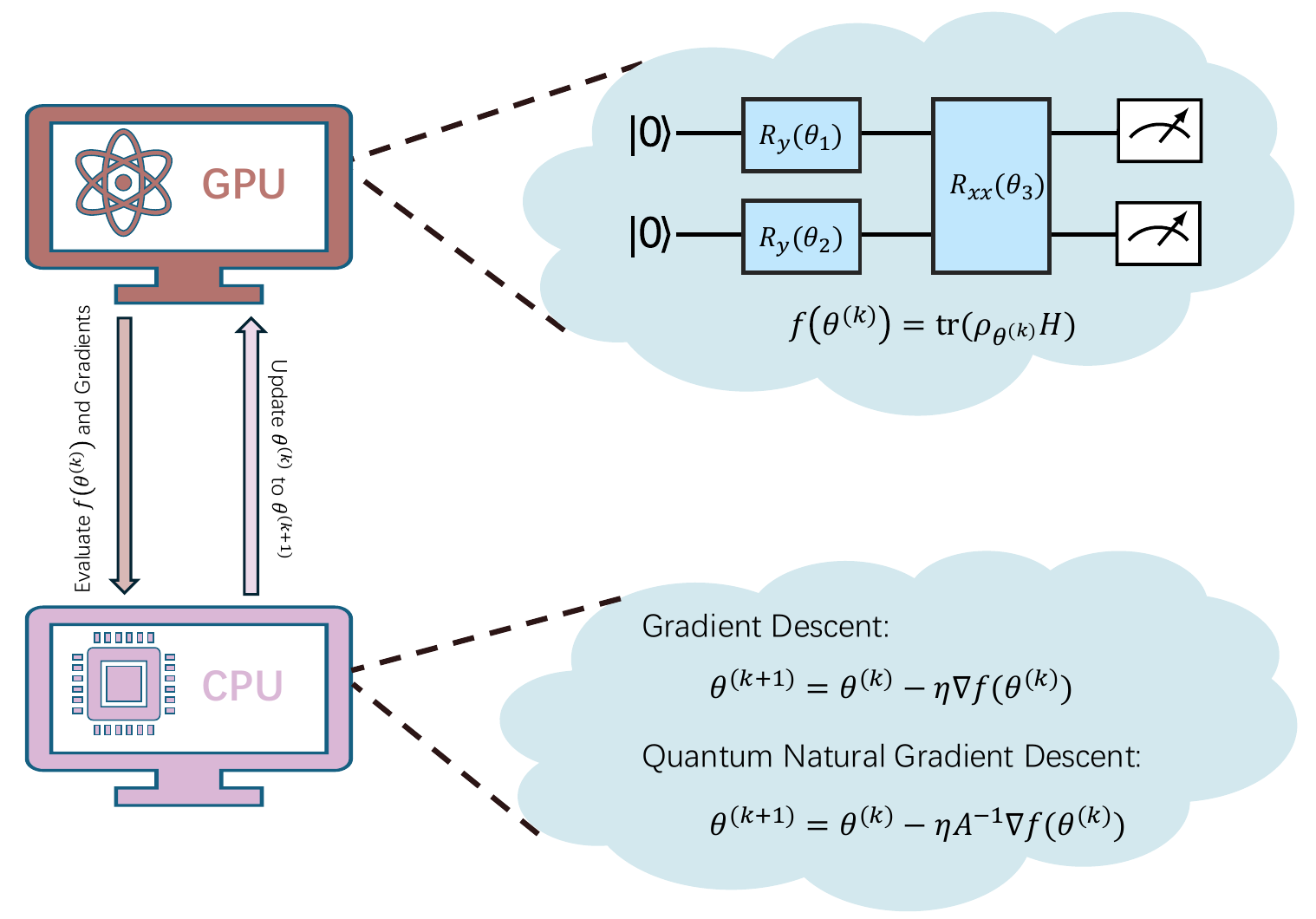}  
    \caption{The general workflow of VQEs. In the quantum part, a parameterized quantum circuit is used to estimate the expectation value $\tr(\rho_{\theta^{(k)}}H)$ and its gradient, given the current parameters $\theta^{(k)}$. In the classical part, a classical optimizer iteratively updates the parameters from $\theta^{(k)}$ to $\theta^{(k+1)}$ in order to minimize the expectation value. Various optimizers can be employed in the optimization of VQEs, and the choice plays an important role in its performance.}
    \label{fig-vqe}
\end{figure}

\subsection{Quantum Natural Gradient Descent}

In each optimization step, vanilla gradient descent solves the following constrained optimization problem \cite{shi2025weighted}:

\begin{equation}\label{vanilla_problem}
\begin{array}{cl}
\min\limits_{\delta} &\; f(\theta + \delta) \\
\mbox{s.t.} &\; \Vert \delta \Vert^2_2 \leq \varepsilon
\end{array}
\end{equation}

This constrained optimization problem minimizes the cost function within a local neighborhood of the current parameters $\theta^{(k)}$. In \Cref{cons}, we show that the constrained optimization problem above yields the vanilla gradient descent update formula \Cref{eq-vanilla} via a first-order Taylor approximation. One limitation of vanilla gradient descent is that each optimization step is inherently tied to the Euclidean geometry of the parameter space, since the Euclidean distance is used to define the local neighborhood in \Cref{vanilla_problem}. It has been shown that the Euclidean geometry is suboptimal for optimizing quantum variational algorithms \cite{stokes2020quantum, harrow2021low}, and the optimization outcome may vary under different reparameterizations \cite{gacon2021simultaneous, stokes2020quantum}. This can lead to potential inefficiencies in the optimization of VQEs \cite{stokes2020quantum, amari1998natural}.

Quantum natural gradient descent (QNG) \cite{stokes2020quantum} is a gradient-based optimization method designed for variational quantum algorithms. It is the quantum analog of natural gradient descent \cite{amari1998natural} in classical machine learning. Instead of only considering Euclidean geometry in the parameter space, QNG accounts for the quantum geometry of the state space to modify the gradient. In each optimization step, QNG solves the following contained optimization problem, where the local neighborhood is defined by fidelity distance between quantum states:

\begin{equation}\label{qng_problem}
\begin{array}{cl}
\min\limits_{\delta} &\; f(\theta + \delta) \\
\mbox{s.t.} &\; D_{F}(\rho_{\theta}, \rho_{\theta +\delta}) \leq \varepsilon
\end{array}
\end{equation}

where the fidelity distance $D_F(\rho_{\theta},\rho_{\theta+\delta})=1 - (\tr(\sqrt{\sqrt{\rho}\sigma \sqrt{\rho}}))^2$. Similarly, \Cref{qng_problem} can also yield the update rule of QNG. In \Cref{df-qng}, we present the formal definition of QNG and its update rule.

\begin{definition}
    (Quantum Natural Gradient Descent). For pure variational quantum states $\ket{\phi_{\theta}}$ with $\theta=(\theta_1,\cdots,\theta_m)^T$, the quantum natural gradient descent step with respect to the expectation value $f(\theta)=\braket{\phi_{\theta}|H|\phi_{\theta}}$ is defined as:
    \begin{equation}
        \theta^{(k+1)} = \theta^{(k)} - \eta A^{-1}\nabla f(\theta^{(k)})
        \label{eq-qng}
    \end{equation}

    where $A_{ij}=\operatorname{Re}[\braket{\partial_i \phi_{\theta}|\partial_j\phi_{\theta}}-\braket{\partial_i\phi_{\theta}|\phi_{\theta}}\braket{\phi_{\theta}|\partial_j \phi_{\theta}}]$ is the Fubini-Study metric tensor.
\label{df-qng}
\end{definition}

The derivation from \Cref{qng_problem} to \Cref{eq-qng} is shown in \Cref{cons}. As we can see, the Fubini-Study metric tensor is used to modify the gradient in each optimization step. To address the limitation of vanilla gradient descent, QNG takes into account the geometry of the quantum state, and it is reparameterization-invariant \cite{stokes2020quantum}. Several experimental studies further indicate that QNG achieves faster convergence speed than vanilla gradient descent \cite{stokes2020quantum, koczor2022quantum, yamamoto2019natural}.

Despite its many advantages, a main limitation of QNG is the requirement of additional quantum computational costs. As shown in \Cref{eq-qng}, the Fubini-Study metric tensor is required in each optimization step. In the general case, this metric tensor is not available in analytic form and must be estimated empirically on hardware \cite{ferreira2025variational}. To address this limitation, we propose a method that avoids additional quantum computational costs.

\section{Method Formulation} \label{MF}
In this section, we formally present the proposed method in \Cref{df-hqng}. Before introducing our definition, we first review the notion of the pullback metric \cite{augenstein2024probabilistic, bai2022geometric, diepeveen2024score} and explain how it can be used to perform gradient-based optimization in \Cref{df-pullback}. We also show that QNG, as defined in \Cref{df-qng}, can be interpreted from the perspective of the pullback metric in \Cref{pp-qng}. Detailed proofs for the results in this section are provided in \Cref{pf-1}. We also reveal the connections and differences between our method and the related work OP-VQITE \cite{anuar2024operator} in \Cref{COPE}.

\begin{definition}
    (Pullback Metric). Consider an immersion $h:\mathcal{S}\rightarrow \mathcal{M}$ from a parameter space $\mathcal{S}$ to a Riemannian Manifold $\mathcal{M}$ equipped with a Riemannian metric $G_{\mathcal{M}}$. A pullback metric $G_{\theta}$ on $\mathcal{S}$ is induced by the immersion $h$. For $\theta\in\mathcal{S}$ and $u,v \in \mathcal{T_{\theta}S}$, $G_{\theta}$ is given by:
    \begin{equation}
        G_{\theta}(u,v) = G_{\mathcal{M}}(dh_{\theta}(u), dh_{\theta}(v))
    \end{equation}

    Particularly, if $\mathcal{S}$ and $\mathcal{M}$ are both Euclidean space with coordinates, then the matrix form of the pullback metric is given by:
    \begin{equation}
        G=J_h^TJ_h
    \end{equation}    
    where $J_h$ is the Jacobian of $h$ at $\theta$. Hence, the element of the matrix $ G_{ij}=\langle\partial_ih(\theta),\partial_jh(\theta)\rangle$.
\label{df-pullback}
\end{definition}

A detailed discussion of above definition of the pullback metric can be found in \cite{augenstein2024probabilistic}. The Riemannian pullback metric can be used to define a gradient-based optimization \cite{bai2022geometric} with respect to a cost function $f=g\circ h:\mathcal{S}\rightarrow \mathbb{R}$ where $h:\mathcal{S}\rightarrow \mathcal{M}$ and $g:\mathcal{M}\rightarrow \mathbb{R}$. For each $\theta \in \mathcal{S}$, the gradient descent with the pullback metric is defined as: 
\begin{equation}
    \theta^{(k+1)} = \theta^{(k)} - \eta G^{-1}\nabla f(\theta^{(k)}) 
    \label{eq-rmg}
\end{equation}

where $G$ is the matrix form of the pullback metric defined in \Cref{df-pullback}. Here, we show that the QNG defined in \Cref{df-qng} can be interpreted from the perspective of the gradient descent with Riemannian pullback metric in \Cref{pp-qng}.

\begin{proposition}
    For an $n$-qubit VQE, suppose parameter space $\mathcal{S}\in \mathbb{R}^{m}$, and let $\mathcal{M} \subseteq \mathbb{R}^{4^n}$ be the image of the space of $2^n \times 2^n$ Hermitian matrices under the mapping: $Q\rightarrow[\tr(QP_1),\cdots,\tr(QP_{4^n})]$, where $Q$ is an arbitrary $2^n\times 2^n$ Hermitian matrix and $P_r \ (1 \leq r \leq 4^n)$ are the possible Pauli strings. The metric $G_{\mathcal{M}}$ is the Euclidean Riemannian metric in $\mathbb{R}^{4^n}$. The gradient descent with Riemannian pullback metric for the cost function $f(\theta)=\tr(\rho_{\theta}H)$ can be formulated as \Cref{eq-rmg}, where the pullback metric $G$ is equal to Fubini-Study metric tensor $A$ up to a factor: $G=2^{n+1}A$. 
    \label{pp-qng}
\end{proposition}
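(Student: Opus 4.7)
The plan is to compute the pullback metric directly from its definition and then reduce it to the Fubini--Study expression. By \Cref{df-pullback} with the Euclidean metric on $\mathbb{R}^{4^n}$, the $(i,j)$ entry of the pullback metric is
\begin{equation*}
    G_{ij} \;=\; \sum_{r=1}^{4^n} \tr(\partial_i\rho_\theta P_r)\,\tr(\partial_j\rho_\theta P_r).
\end{equation*}
The first step is to recognize that $\{P_r/\sqrt{2^n}\}_{r=1}^{4^n}$ is an orthonormal basis of the real vector space of $2^n\times 2^n$ Hermitian matrices under the Hilbert--Schmidt inner product. Since $\partial_i\rho_\theta$ and $\partial_j\rho_\theta$ are both Hermitian, Parseval's identity gives
\begin{equation*}
    \sum_{r=1}^{4^n} \tr(\partial_i\rho_\theta P_r)\,\tr(\partial_j\rho_\theta P_r) \;=\; 2^n\,\tr(\partial_i\rho_\theta \, \partial_j\rho_\theta).
\end{equation*}

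The second step is to evaluate $\tr(\partial_i\rho_\theta\,\partial_j\rho_\theta)$ for the pure state $\rho_\theta=\ket{\phi_\theta}\bra{\phi_\theta}$. I would expand
\begin{equation*}
    \partial_i\rho_\theta \;=\; \ket{\partial_i\phi}\bra{\phi} + \ket{\phi}\bra{\partial_i\phi},
\end{equation*}
multiply the analogous expression for $\partial_j\rho_\theta$, and take the trace term by term. Using $\braket{\phi|\phi}=1$, this yields four contributions that combine into
\begin{equation*}
    \tr(\partial_i\rho_\theta\,\partial_j\rho_\theta) \;=\; 2\operatorname{Re}\braket{\partial_i\phi|\partial_j\phi} + 2\operatorname{Re}\!\bigl(\braket{\phi|\partial_i\phi}\braket{\phi|\partial_j\phi}\bigr).
\end{equation*}

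The third step matches this against the Fubini--Study tensor $A_{ij}=\operatorname{Re}[\braket{\partial_i\phi|\partial_j\phi}-\braket{\partial_i\phi|\phi}\braket{\phi|\partial_j\phi}]$. The key observation, obtained by differentiating the normalization identity $\braket{\phi|\phi}=1$, is that $\braket{\phi|\partial_i\phi}$ is purely imaginary for every $i$. Writing $\braket{\phi|\partial_i\phi}=ib_i$ with $b_i\in\mathbb{R}$, one gets $\operatorname{Re}(\braket{\phi|\partial_i\phi}\braket{\phi|\partial_j\phi}) = -b_ib_j$ while $\operatorname{Re}(\braket{\partial_i\phi|\phi}\braket{\phi|\partial_j\phi}) = +b_ib_j$, so the two cross terms agree up to sign. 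Substituting back gives $\tr(\partial_i\rho_\theta\,\partial_j\rho_\theta) = 2A_{ij}$, and combining with the Parseval factor yields $G_{ij}=2^{n+1}A_{ij}$, which is the claim.

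The only subtle point is the sign bookkeeping in the third step: one must be careful to exploit the purely-imaginary constraint on $\braket{\phi|\partial_i\phi}$ to convert the ``$+$'' appearing naturally from the trace expansion into the ``$-$'' appearing in the Fubini--Study definition. Everything else is routine: the Parseval identity over the Pauli basis and the product-rule expansion of $\partial\rho$ for a pure state.
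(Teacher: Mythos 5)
Your proof is correct and follows the same chain of equalities as the paper's: $G_{ij}=\sum_{r}\tr(\partial_i\rho_\theta P_r)\tr(\partial_j\rho_\theta P_r)=2^n\tr(\partial_i\rho_\theta\,\partial_j\rho_\theta)=2^{n+1}A_{ij}$. The only differences are in how two of the steps are justified: you obtain the middle equality via Parseval's identity over the orthonormal basis $\{P_r/\sqrt{2^n}\}$ where the paper uses the Pauli decomposition of the swap operator together with the swap trick (the two arguments are equivalent), and you derive $\tr(\partial_i\rho_\theta\,\partial_j\rho_\theta)=2A_{ij}$ explicitly from the projector expansion $\partial_i\rho_\theta=\ket{\partial_i\phi}\bra{\phi}+\ket{\phi}\bra{\partial_i\phi}$ and the purely imaginary nature of $\braket{\phi|\partial_i\phi}$, where the paper cites this Hilbert--Schmidt/Fubini--Study relation from the literature.
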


\Cref{pp-qng} shows that the Riemannian pullback metric differs from the Fubini-Study metric tensor in QNG only by a constant factor. Since the constant factor $c$ can be absorbed into the learning rate, QNG defined in \Cref{df-qng} is fully equivalent to the gradient descent with Riemannian pullback metric in \Cref{pp-qng}.

\begin{figure}[htbp]
  \centering
  \begin{subfigure}[b]{0.4\textwidth}
    \includegraphics[width=\textwidth]{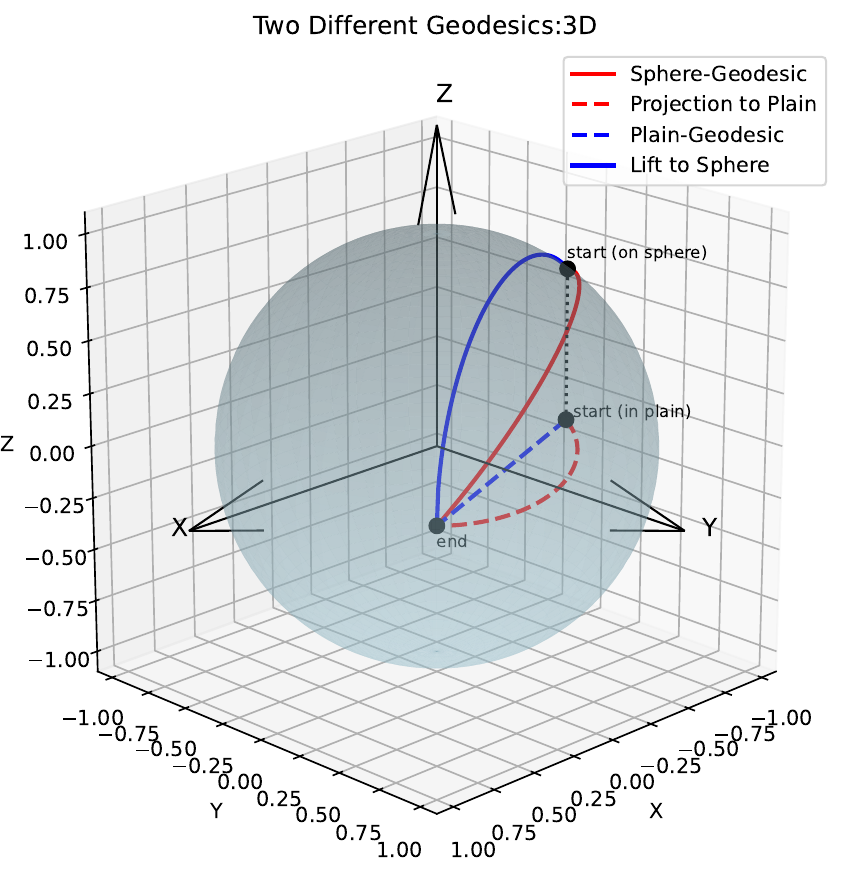}
    \caption{}
  \end{subfigure}
  \hspace{0.05\textwidth}
  \begin{subfigure}[b]{0.4\textwidth}
    \includegraphics[width=\textwidth]{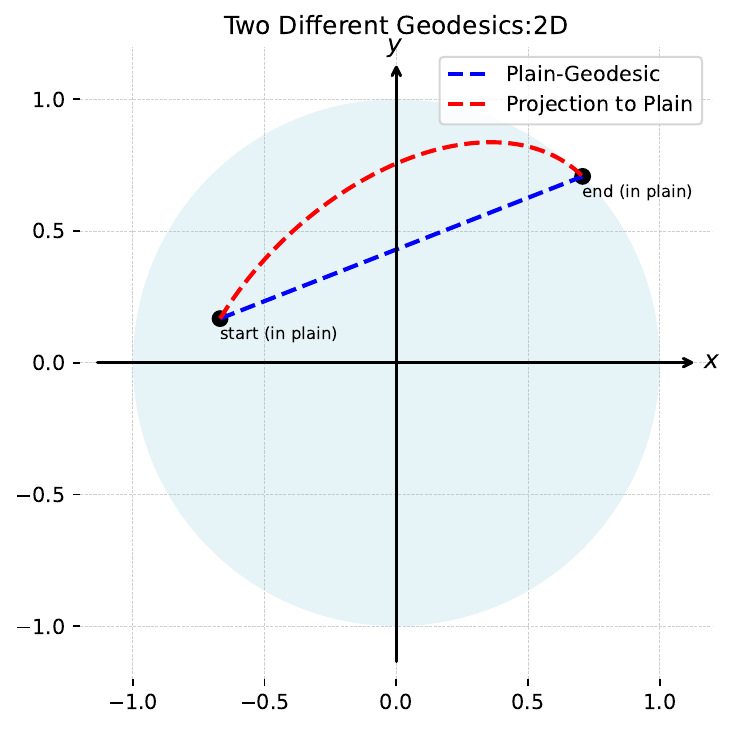}
    \caption{}
  \end{subfigure}
  \caption{To minimize the expectation value of the Hamiltonian $H=-X-Y$, only the manifold embedded in lower dimensional space (the unit disc with boundary in (b)) attributes non-trivially to the optimization rather than the entire Bloch sphere (the sphere in (a)), since the $z$-axis coordinate does not appear in the cost function. The geodesic with respect to the spherical metric (red solid line in (a)) represents the shortest path on the Bloch sphere from the initial point to the solution. However, its projection onto the non-trivial $xy$-plain (red dotted lines in a and b) is not the shortest path compared to the true geodesic within the $xy$-plain (blue dotted lines in a and b). Therefore, in the pullback metric view, H-QNG directly employs $g_{\mathcal{M}}$ as the metric of the non-trivial manifold in \Cref{df-hqng}, in contrast to standard QNG which considers $g_{\mathcal{M}}$ to be the metric of the full quantum state space in \Cref{pp-qng}.}
  \label{fig-submanifold}
\end{figure}

The function $h:\mathcal{S}\rightarrow \mathcal{M}$ in \Cref{pp-qng} maps a parameter $\theta \in R^m$ to the vector $[\tr(\rho_{\theta}P_1),\cdots \tr(\rho_{\theta}P_{4^n})]\in R^{4^{n}}$, and the corresponding function $g:\mathcal{M}\rightarrow \mathbb{R}$ is a linear combination of these $4^n$ elements such that $g\circ h(\theta)=f(\theta)=\tr(\rho_\theta H)$. However, the target Hamiltonian $H=\sum_r^{v}a_rP_r$ typically involves only a subset of the $4^n$ Pauli operators in general. Therefore, we can restrict the function $h$ to include only the relevant $v$ terms and discard the others. In other words, the cost function $f(\theta)$ is only determined by a manifold embedded in the lower dimensional space $\mathcal{M} \subseteq \mathbb{R}^v$. A one-qubit example is shown in Figure \ref{fig-submanifold}. The manifold $\mathcal{M}$ here represents the Bloch sphere. However, for the Hamiltonian $H=-X-Y$, only the $2$-dimensional manifold (a unit disc with boundary) in $XY$-plain contributes to the optimization since the objective function does not contain any component in the $z$-axis direction. Moreover, since the coefficients of the linear combination are determined by the Hamiltonian, they can be directly incorporated into the function $h$ as well, which rescales the lower-dimensional manifold. Based on this observation, we define a new gradient descent method with Riemannian pullback metric directly on the non-trivial manifold as the proposed method of this paper in \Cref{df-hqng}. We name it as \textit{Hamiltonian-aware Quantum Natural Gradient Descent} (H-QNG), to distinguish it from the standard QNG, where the Fubini-Study metric is not related to the given Hamiltonian.

\begin{definition}
    (Hamiltonian-aware Quantum Natural Gradient Descent (H-QNG)). For an $n$-qubit VQE with Hamiltonian $H=\sum_{r=1}^{v}a_rP_r$, suppose parameter space $\mathcal{S}\in \mathbb{R}^{m}$, and let $\mathcal{M} \in \mathbb{R}^{4^n}$ be the image of the space of $2^n \times 2^n$ Hermitian matrices under the mapping: $Q\rightarrow[a_1\tr(QP_1),\cdots,a_v\tr(QP_{v})]$, where $Q$ is an arbitrary $2^n\times 2^n$ Hermitian matrix and $P_r \ (1 \leq r \leq v)$ are the Pauli strings in the Hamiltonian $H$. The metric $g_{\mathcal{M}}$ is the Euclidean Riemannian metric in $\mathbb{R}^{v}$. The gradient descent with Riemannian pullback metric for the cost function $f(\theta)=\tr(\rho_{\theta}H)$ can be formulated as \Cref{eq-rmg}, where the pullback metric $G_{ij} = \sum_r^v a_v^2\tr(\partial_i\rho_{\theta}P_r)\tr(\partial_j\rho_{\theta}P_r)$. The Hamiltonian-aware Quantum Natural Gradient Descent is defined as:
    \begin{equation}
        \theta^{(k+1)} = \theta^{(k)} - \eta T^{-1}\nabla f(\theta^{(k)})
        \label{eq-hqng}
    \end{equation}
    where $T_{ij}=\frac{1}{2\sqrt{\sum_{r=1}^va_r^2}}G_{ij}$. When $v=4^n$ and each $a_r=1$, H-QNG reduces to standard QNG.
    \label{df-hqng}
\end{definition}

The prefactor $\frac{1}{2\sqrt{\sum_{r=1}^va_r^2}}$ is introduced to ensure that the definition of H-QNG remains consistent with standard QNG: when $v=4^n$ and each $a_r=1$, the manifold in \Cref{df-hqng} reduces to the manifold in \Cref{pp-qng}, in which case we expect that H-QNG reduces to standard QNG. To realize this consistency, the prefactor must equal $2^{-(n+1)}$ in this condition according to \Cref{pp-qng}. Hence, the prefactor $\frac{1}{2\sqrt{\sum_{r=1}^va_r^2}}$ is used to appropriately scale $G_{ij}$.

As shown in \Cref{eq-qng}, the metric tensor $G$ does not require additional quantum measurement cost, as each gradient term in its expression is already required for evaluating $\nabla f(\theta)$. In contrast, the standard QNG demands extra quantum resources to compute the Hilbert–Schmidt metric tensor. The detailed analysis about computational cost scaling will be discussed in \Cref{SA}. 

One desired properties of standard QNG is its invariance under diffeomorphism reparameterization \cite{stokes2020quantum, koczor2022quantum}. That is, the optimization of standard QNG remains unchanged under diffeomorphism transformations of the parameter space. This reparameterization invariance is believed to improve the optimization process, which is discussed in \cite{song2018accelerating}. In \Cref{RI}, we show and prove that H-QNG also possesses this property.

We note that the definition of the pullback metric in \Cref{df-pullback} requires the function $h$ to be an immersion. However, the map $h:\theta \in \mathbb{R}^m\rightarrow [a_1\tr(\rho_{\theta}P_1),\cdots a_v\tr(\rho_{\theta}P_v)] \in \mathbb{R}^v$ realized by a variational quantum circuit, is not always guaranteed to be an immersion. This may affect the condition number and the invertibility of the resulting matrix $G$ of H-QNG. In \Cref{SMT}, we show that the matrix $G$ is almost everywhere invertible in the typical application scenario of VQEs. We also introduce regularization methods to address cases where $G$ is non-invertible or ill-conditioned.

\section{Method Analysis} \label{MA}
\subsection{Scaling Analysis} \label{SAS}
In this subsection, we analyze the quantum computational costs of H-QNG. We do not consider classical computational costs in this subsection, such as the inversion of the metric tensor.

Firstly, we analyze the quantum measurement overhead of each optimization step, which is proportional to the number of quantities that required to be estimated during each optimization step. Here we assume an equal number of measurement shots per quantity estimation. For the quantum measurement overhead of vanilla gradient descent and QNG, we present the following proposition.
\begin{proposition}
    For an $n$-qubit VQE with $m$ parameters and a given Hamiltonian $H=\sum_{r=1}^va_rP_r$ consisting of $v$ terms, the quantum measurement overheads of each optimization step of vanilla gradient descent and QNG are $\operatorname{O}(mv)$ and $\operatorname{O}(mv + m^2)$, respectively.
    \label{pp-qcvgqng}
\end{proposition}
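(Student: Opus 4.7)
The plan is to directly count, for each optimization step, the number of distinct expectation values that must be estimated on the quantum device; under the stated assumption that each estimation uses an equal (constant) number of shots, the measurement overhead is then proportional to this count.

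For vanilla gradient descent, the only quantum quantity needed is the gradient $\nabla f(\theta)$, which has $m$ components $\partial_i f(\theta)$ for $i=1,\dots,m$. First I would recall the parameter-shift rule (as cited in the preliminaries), which evaluates each $\partial_i f(\theta)$ as a finite difference $\tfrac{1}{2}\bigl[f(\theta+\tfrac{\pi}{2}e_i)-f(\theta-\tfrac{\pi}{2}e_i)\bigr]$, i.e.\ using two evaluations of $f$ at shifted parameters. Next I would expand $f(\theta')=\sum_{r=1}^{v} a_r\,\tr(\rho_{\theta'}P_r)$ and observe that evaluating $f$ at a given parameter point requires estimating the $v$ Pauli expectation values $\tr(\rho_{\theta'}P_r)$. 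Multiplying $m$ components by $2$ shifts by $v$ Pauli terms yields $2mv=\mathcal{O}(mv)$ expectation values per step.

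For QNG, the update in \Cref{eq-qng} requires both $\nabla f(\theta)$ and the Fubini--Study metric tensor $A$. The gradient piece contributes $\mathcal{O}(mv)$ by the argument above. For the metric piece I would invoke the standard Hadamard-test (or block-encoding) estimation of $A_{ij}=\operatorname{Re}[\langle\partial_i\phi_\theta|\partial_j\phi_\theta\rangle-\langle\partial_i\phi_\theta|\phi_\theta\rangle\langle\phi_\theta|\partial_j\phi_\theta\rangle]$, which the paper cites: each of the two inner-product terms is measured by a constant number of circuits, giving $\mathcal{O}(1)$ expectation values per entry. Using the symmetry $A_{ij}=A_{ji}$, there are $m(m+1)/2$ independent entries, producing $\mathcal{O}(m^2)$ expectation values. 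Summing the two contributions gives $\mathcal{O}(mv+m^2)$.

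Very little in this argument is subtle; the main obstacle is just fixing an unambiguous cost model. I would be explicit that (i) a ``quantity'' means one expectation value estimated with a fixed shot budget, (ii) gradients are obtained via the parameter-shift rule so that each partial costs exactly two evaluations of $f$, and (iii) each off-diagonal Fubini--Study entry is computed by $\mathcal{O}(1)$ Hadamard-test circuits. Under this model the two counts combine additively and no hidden dependence on $v$ enters the metric-tensor term, since $A$ is defined purely from state overlaps and is independent of $H$. With the accounting pinned down, the bounds $\mathcal{O}(mv)$ and $\mathcal{O}(mv+m^2)$ follow immediately.
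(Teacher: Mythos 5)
Your proposal is correct and follows essentially the same counting argument as the paper: $2$ parameter-shift evaluations per partial derivative times $m$ parameters times $v$ Pauli terms for the gradient, plus roughly $m^2/2$ symmetric metric-tensor entries each estimated with $\mathcal{O}(1)$ Hadamard-test circuits for QNG. The only cosmetic difference is that the paper states each metric entry costs four estimations where you write $\mathcal{O}(1)$, which does not change the bound.
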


\begin{proof}
    For vanilla gradient descent, each optimization step requires evaluating the gradient with respect to each parameter for each Hamiltonian term $\partial_{\theta_i}\tr(\rho_{\theta}P_r)$, using the parameter-shift rule \cite{wierichs2022general,sack2022avoiding}. Since each partial derivative requires to estimate two quantities using parameter-shift rule, the total number of measurement overhead per optimization step is proportional to $2mv$ \footnote{Here we suppose each pair of Pauli terms contained in the Hamiltonian is non-commutative in general.}. For QNG, in addition to the gradient, each optimization step requires estimating the Fubini-Study metric tensor, which is an $m\times m$ matrix. Because it is symmetric, there are approximately $\frac{m^2}{2}$ elements to be estimated. Each element can be evaluated using Hadamard test \cite{koczor2022quantum, koczor2022quantum2}, which involves four quantity estimations via the parameter-shift rule. As a result, the total number of circuit estimaitons per optimization step of QNG is $\operatorname{O}(mv+m^2)$.
\end{proof}

As shown in \Cref{pp-qcvgqng}, each optimization step of QNG requires higher quantum measurement overhead compared to vanilla gradient descent. For H-QNG, we now present the following theorem regarding its quantum computational complexity.

\begin{theorem}
    For an $n$-qubit VQE with $m$ parameters and a given Hamiltonian $H=\sum_{r=1}^va_rP_r$ consisting of $v$ terms, the quantum measurement overhead of each optimization step of H-QNG is $\operatorname{O}(mv)$.
    \label{th-qchqng}
\end{theorem}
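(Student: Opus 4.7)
The plan is to compare the quantum quantities needed for H-QNG against those needed for vanilla gradient descent and show that H-QNG reuses exactly the same set of circuit estimates, with no additional measurements required beyond the gradient itself.

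First, I would recall from the discussion preceding Proposition~\ref{pp-qcvgqng} that the gradient of the cost function decomposes as $\partial_i f(\theta) = \sum_{r=1}^{v} a_r \tr(\partial_i \rho_\theta P_r)$, so one optimization step of vanilla gradient descent requires estimating the $mv$ quantities $\{\tr(\partial_i \rho_\theta P_r)\}_{i,r}$ via the parameter-shift rule, incurring a cost of $\operatorname{O}(mv)$ circuit evaluations.

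Next, I would examine the H-QNG metric tensor from Definition~\ref{df-hqng}, namely $G_{ij} = \sum_{r=1}^v a_r^2 \tr(\partial_i \rho_\theta P_r)\tr(\partial_j \rho_\theta P_r)$, and observe that every factor $\tr(\partial_i \rho_\theta P_r)$ appearing in $G_{ij}$ is already a member of the set $\{\tr(\partial_i \rho_\theta P_r)\}_{i,r}$ estimated for the gradient. Hence the matrix $G$ (and therefore also $T = G/(2\sqrt{\sum_r a_r^2})$) can be assembled purely by classical post-processing of the gradient estimates, without invoking the quantum device a second time and without requiring any Hadamard-test-type circuits such as those needed for the Fubini-Study tensor in standard QNG.

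Combining the two observations, the total quantum measurement overhead per H-QNG step is the overhead for the gradient alone, which is $\operatorname{O}(mv)$. The main subtlety, and the only step that needs care, is emphasizing that the prefactor $a_r^2$ and the outer products in $G_{ij}$ are classical operations and do not introduce any new expectation values to be estimated on hardware; once this is noted, the bound follows immediately.
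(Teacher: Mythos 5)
Your proposal is correct and follows essentially the same argument as the paper's own proof: the gradient estimation already provides all quantities $\tr(\partial_i\rho_\theta P_r)$, and the metric tensor $T$ is assembled from these by purely classical post-processing, so the total overhead remains $\operatorname{O}(mv)$. Your version simply spells out the decomposition of $\partial_i f$ and the classical nature of the outer products more explicitly than the paper does.
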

\begin{proof}
    Note that computing each element of the matrix $T$ in \Cref{eq-hqng} only requires all the partial derivative terms, which have already been estimated during the gradient estimation. Therefore, estimating the matrix $T$ in H-QNG does not require any additional quantum measurement overhead, and the total cost remains $\operatorname{O}(mv)$.
\end{proof}

\Cref{pp-qcvgqng} and \Cref{th-qchqng} show that H-QNG requires the same quantum measurement overhead as vanilla gradient descent. Compared to standard QNG, each optimization step of H-QNG is more efficient in terms of quantum resource usage. Besides, H-QNG can even achieve a better convergence speed compared to standard QNG in terms of optimization step. We conduct a numerical experiment to illustrate this result in \Cref{fig-scaling}.

\begin{figure}[h]
    \centering
    \includegraphics[width=0.45\textwidth]{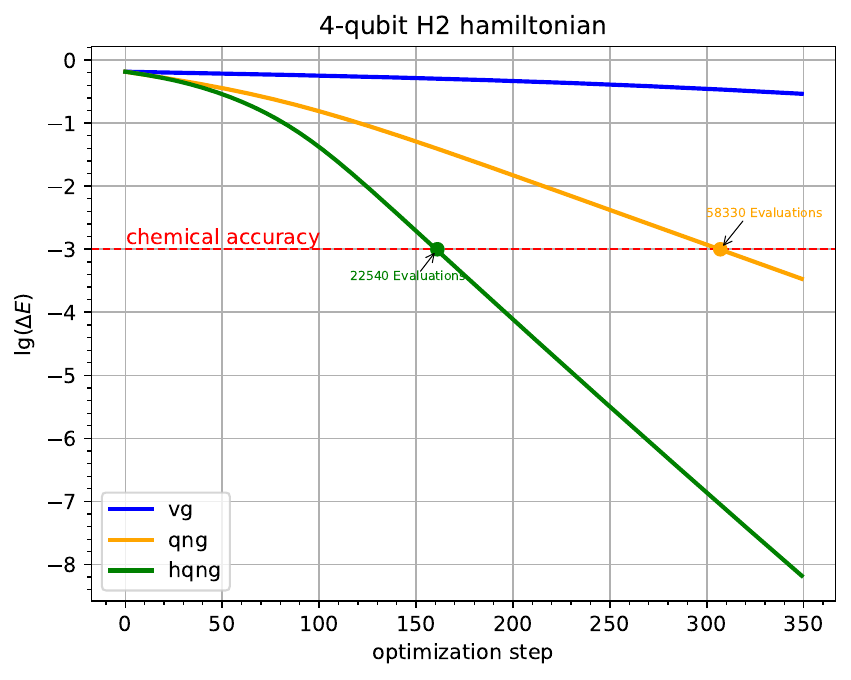}  
    \caption{The energy distance from the ground state $\Delta E$ of a 4-qubit $H_2$ Hamiltonian using vanilla gradient descent (blue line), QNG (orange line), and H-QNG (green line). All three methods start from the identical initial parameters. H-QNG reaches chemical accuracy (red dotted line) earlier than standard QNG, both in terms of optimization steps and quantum circuit evaluations (in the unit of  quantity to be estimated). Both standard QNG and H-QNG significantly outperform vanilla gradient descent. }
    \label{fig-scaling}
\end{figure}

\Cref{fig-scaling} shows the energy distance from the ground state for a 4-qubit $H_2$ Hamiltonian using vanilla gradient descent, standard QNG, and H-QNG. The Hamiltonian used in the experiment contains $15$ terms (including identity), the specific formulation of the Hamiltonian and variational circuit are from the PennyLane Molecules library \cite{bergholm2018pennylane}. All three methods start from the identical initial parameters. In terms of optimization steps (horizontal axis), H-QNG (green line) reaches chemical accuracy (red dotted line) faster than standard QNG (orange line). We also report the number of circuit evaluations required to achieve chemical accuracy in the unit of quantity to be estimated, using conclusion from \Cref{pp-qcvgqng} and \Cref{th-qchqng}. Since each optimization step of H-QNG requires fewer quantum measurement overheads, it reaches chemical accuracy even faster in terms of total circuit evaluations compared to standard QNG. Besides, we notice that vanilla gradient descent fails to achieve chemical accuracy within the given optimization budget. As H-QNG and vanilla gradient descent have the same measurement overhead, H-QNG significantly outperforms vanilla gradient descent in convergence speed under the same quantum computational budget.

In addition to measurement overhead, compared to standard QNG, H-QNG also requires lower circuit complexity. As discussed in \Cref{pp-qcvgqng}, estimating elements of the Fubini-Study metric tensor requires techniques such as the Hadamard test or swap test, which involve additional ancillary qubits or deeper quantum circuits. In contrast, H-QNG introduces no additional circuit complexity.

\label{SA}
\subsection{Reparameterization Invariance}
\label{RI}
One of the most important properties of standard QNG is the invariance under reparameterization given by a diffeomorphism\cite{stokes2020quantum, koczor2022quantum}. Here we give a brief definition to the reparameterization invariance property for an optimization method in \Cref{df-ri}.

\begin{definition}
    (Reparameterization Invariance for an Optimization Method). Consider a reparameterization given by a diffeomorphism $\theta=t(\psi)$. Suppose in a step of the optimization, the parameter $\theta$ is updated to $\theta'$ , and correspondingly $\psi$ is updated to $\psi'$. An optimization method is said to be reparameterization invariant if, whenever $\theta=t(\psi)$, it holds that $\theta'=t(\psi')$ after the update.
    \label{df-ri}
\end{definition}

To discuss the reparameterization invariance property of QNG and H-QNG, we express both methods in the form of differential equations, which we refer to as the continuous form of QNG and H-QNG. 
\begin{equation}
    \dot{\theta}(t)=-A_{\theta}^{-1}\nabla_{\theta} f(\theta)
    \label{eq-qngode}    
\end{equation}

\begin{equation}
    \dot{\theta}(t)=-T_{\theta}^{-1}\nabla_{\theta} f(\theta)
    \label{eq-hqngode}
\end{equation}

\Cref{eq-qngode} and \Cref{eq-hqngode} lead to \Cref{eq-qng} and \Cref{eq-hqng}, respectively, through the finite difference method \cite{zhou2012numerical}. In practical use, with a sufficiently small learning rate and an adequate number of optimization steps, \Cref{eq-qng} and \Cref{eq-hqng} can be regarded as good discrete approximations of \Cref{eq-qngode} and \Cref{eq-hqngode}, respectively. We specifically refer to \Cref{eq-qngode} and \Cref{eq-hqngode} as the continuous forms of QNG and H-QNG, to distinguish them from their usual discrete forms in \Cref{eq-qng} and \Cref{eq-hqng}. Based on these formulations, we further establish \Cref{pp-inva} and \Cref{th-inva}.

\begin{proposition}
    Continuous form of Quantum Natural Gradient Descent defined in \Cref{eq-qngode} is reparameterization invariant.
    \label{pp-inva}
\end{proposition}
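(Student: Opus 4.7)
The plan is to show that under any diffeomorphism $\theta = t(\psi)$, the curve $\psi(s)$ produced by \Cref{eq-qngode} in the $\psi$-coordinates is related to the curve $\theta(s)$ produced by the same equation in the $\theta$-coordinates via $\theta(s) = t(\psi(s))$ whenever this identity holds at time $s = 0$. This is the continuous-time analogue of the step-by-step condition in \Cref{df-ri}, and it is exactly what invariance amounts to in the ODE setting.

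First, I would derive how the two ingredients of \Cref{eq-qngode} transform under $t$. Writing $J_{ki} = \partial\theta_k/\partial\psi_i$ for the (invertible) Jacobian of $t$, the chain rule applied to $|\phi_{t(\psi)}\rangle$ gives $\partial_{\psi_i}|\phi\rangle = \sum_k J_{ki}\,\partial_{\theta_k}|\phi\rangle$. Since $J$ is real it factors out of both the bra and the ket, and substituting into the coordinate formula in \Cref{df-qng} term by term (keeping the real part and the projector subtraction) yields the tensorial transformation $A_\psi = J^T A_\theta J$. The same chain-rule computation applied directly to $f$ gives $\nabla_\psi f = J^T \nabla_\theta f$.

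Next, I would substitute these transformation laws into \Cref{eq-qngode} written in the $\psi$-coordinates:
\[
    \dot\psi \;=\; -A_\psi^{-1}\nabla_\psi f \;=\; -(J^T A_\theta J)^{-1} J^T \nabla_\theta f \;=\; -J^{-1} A_\theta^{-1}\nabla_\theta f.
\]
Setting $\tilde\theta(s) := t(\psi(s))$, the chain rule gives $\dot{\tilde\theta} = J\dot\psi = -A_\theta^{-1}\nabla_\theta f|_{\theta = \tilde\theta(s)}$, which is precisely \Cref{eq-qngode} in the $\theta$-coordinates. Since $\tilde\theta(0) = \theta(0)$ by hypothesis, Picard--Lindel\"of uniqueness for first-order ODEs forces $\tilde\theta(s) = \theta(s)$ on the common domain of existence, establishing the desired invariance.

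The main obstacle I anticipate is cleanly verifying the tensorial identity $A_\psi = J^T A_\theta J$ from the explicit coordinate formula in \Cref{df-qng}: one must confirm that the inner-product part and the projector-subtraction part both transform covariantly, and that taking the real part commutes with multiplication by the real Jacobian entries. This step is essentially automatic once one views $A$ as a pullback Riemannian metric via \Cref{pp-qng}, but phrasing it purely at the coordinate level still requires some care. Once the transformation law is secured, the remainder of the argument reduces to routine matrix algebra together with the standard uniqueness theorem for first-order ODEs.
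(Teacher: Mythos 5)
Your proof is correct and follows essentially the same route as the paper: you establish the transformation laws $A_\psi = J^T A_\theta J$ and $\nabla_\psi f = J^T\nabla_\theta f$ (the paper's \Cref{lm-jac}), reduce the claim to the matrix identity $-(J^T A_\theta J)^{-1}J^T\nabla_\theta f = -J^{-1}A_\theta^{-1}\nabla_\theta f$, and conclude via uniqueness of ODE solutions (the paper's \Cref{lm-ode}). No substantive differences.
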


The property in \Cref{pp-inva} has been discussed in \cite{stokes2020quantum,koczor2022quantum}. In \Cref{th-inva}, we demonstrate that H-QNG also possesses reparameterization invariance. The detailed proof of \Cref{th-inva} is provided in \Cref{pf-2}.

\begin{theorem}
    Continuous form of Hamiltonian-aware Quantum Natural Gradient Descent defined in \Cref{eq-hqngode} is reparameterization invariant.
    \label{th-inva}
\end{theorem}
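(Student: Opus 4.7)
The plan is to show that under any diffeomorphic reparameterization $\theta = t(\psi)$, the trajectory generated by \Cref{eq-hqngode} in $\theta$-coordinates and the analogous H-QNG ODE in $\psi$-coordinates describe the same curve. The argument will rest on three classical ingredients: the chain rule $\dot\theta = J_t \dot\psi$ with $J_t$ the Jacobian of $t$; the covariant transformation of the gradient $\nabla_\psi (f \circ t) = J_t^T \nabla_\theta f$; and the tensorial transformation $T_\psi = J_t^T T_\theta J_t$ of the metric defined in \Cref{df-hqng}. Once these three identities are in place, the ODE transformation reduces to a one-line linear-algebra manipulation.

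The first step I would carry out is to verify the metric transformation, which is the conceptual heart of the proof. By \Cref{df-hqng}, $T_\theta$ is, up to the constant prefactor $\tfrac{1}{2\sqrt{\sum_r a_r^2}}$, the pullback through the map $h : \theta \mapsto (a_1 \tr(\rho_\theta P_1), \ldots, a_v \tr(\rho_\theta P_v))$ of the Euclidean metric on $\mathcal{M} \subseteq \mathbb{R}^v$. Under the reparameterization the relevant map becomes $h \circ t$, whose Jacobian is $J_h J_t$, so applying \Cref{df-pullback} in $\psi$-coordinates gives $(J_h J_t)^T (J_h J_t) = J_t^T (J_h^T J_h) J_t$, which yields $T_\psi = J_t^T T_\theta J_t$ after the prefactor, depending only on the Hamiltonian coefficients, is carried through unchanged.

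Combining the three transformation identities, a direct substitution into the right-hand side of the H-QNG ODE written in $\psi$-coordinates yields
\begin{equation*}
-T_\psi^{-1} \nabla_\psi (f \circ t) = -(J_t^T T_\theta J_t)^{-1} J_t^T \nabla_\theta f = -J_t^{-1} T_\theta^{-1} \nabla_\theta f = J_t^{-1} \dot\theta = \dot\psi,
\end{equation*}
which is exactly the time-derivative the $\psi$-trajectory should satisfy. Per \Cref{df-ri}, this confirms reparameterization invariance of the continuous form.

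The main obstacle I expect is justifying the invertibility steps used above, rather than the algebra itself. The invertibility of $J_t$ is immediate from the diffeomorphism assumption, but the invertibility of $T_\theta$ (and consequently of $T_\psi$) is more delicate since the map $h$ of \Cref{df-hqng} need not be an immersion everywhere. As the authors note and address in \Cref{SMT}, $T$ is almost-everywhere invertible in the typical VQE regime, and this property is clearly preserved under smooth reparameterization because $J_t$ is non-singular by hypothesis. A secondary conceptual check is to confirm that the scalar prefactor in \Cref{df-hqng} is genuinely parameter-independent—hence invariant under reparameterization—so that it cancels cleanly in the transformation law $T_\psi = J_t^T T_\theta J_t$; this is immediate from its definition purely in terms of the Hamiltonian coefficients $\{a_r\}$.
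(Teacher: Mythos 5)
Your proposal is correct and follows essentially the same route as the paper's proof: the same three ingredients (chain rule, gradient covariance $\nabla_\psi f = J_t^T\nabla_\theta f$, and metric tensoriality $T_\psi = J_t^T T_\theta J_t$, which the paper establishes by explicit index computation in its Lemma on Jacobian transformations and you obtain more cleanly from the composition $h\circ t$ in the pullback construction) followed by the same one-line inversion. The only piece the paper makes explicit that you leave implicit is the appeal to uniqueness of ODE solutions (the paper's Lemma relating $t$-related vector fields to coinciding trajectories), which is needed to pass from ``$\psi(t)=t^{-1}(\theta(t))$ satisfies the $\psi$-ODE'' to the invariance statement of \Cref{df-ri}.
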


\Cref{th-inva} and \Cref{df-ri} indicates that, just as in standard QNG, given any initial point $\theta^{(0)}=t(\psi^{(0)})$, H-QNG will follow two equivalent trajectories $\{\theta^{(k)}\}$ and $\{\psi^{(k)}\}$ in these two parameter spaces, where each $\theta^{(k)}=t(\psi^{(k)})$. Besides, note that \Cref{pp-inva} and \Cref{th-inva} assume that these gradient-based methods are in the continuous form, but in practice only a finite number of steps can be taken. With a small learning rate, QNG taking finite steps is approximately parameterization invariant \cite{martens2020new}. For H-QNG, a similar conclusion can be drawn as in \cite{martens2020new} for QNG. 

We also evaluate this property through numerical experiments where H-QNG can only take finite steps with small learning rates. The results are shown in \Cref{fig-ri}. Three different reparameterization functions $t$ are used. The optimization is performed in the $\psi$-space for obtain the trajectures $\{\psi^{(k)}\}$, with the same initial point in the $\theta$-space where $\theta^{(0)}=t(\psi^{(0)})$. The reparameterization invariance ensures that the trajectories $\{\theta^{(k)}\}$ in the $\theta$-space obtained by $t(\{\psi^{(k)}\})$ remain unchanged regardless of different choices of $t$. As shown in the figures, the trajectories plotted in $\theta$-space remain invariant under different reparameterizations for both QNG and H-QNG (the blue curve and red curve). However, this invariance does not hold for vanilla gradient descent (the cyan curve). 

\begin{figure}[htbp]
  \centering
  \begin{subfigure}[b]{0.32\textwidth}
    \includegraphics[width=\textwidth]{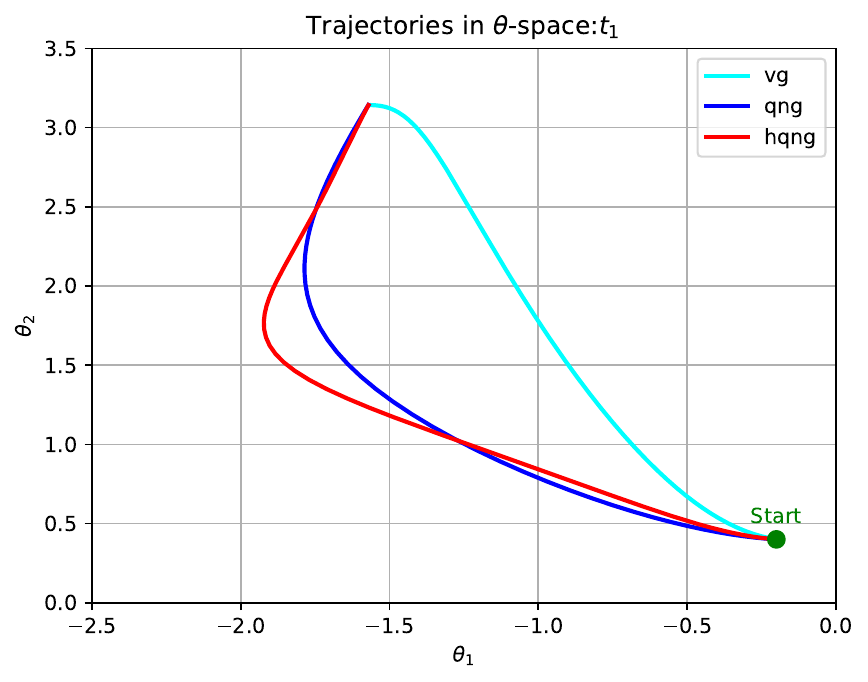}
    \caption{}
  \end{subfigure}
  \begin{subfigure}[b]{0.32\textwidth}
    \includegraphics[width=\textwidth]{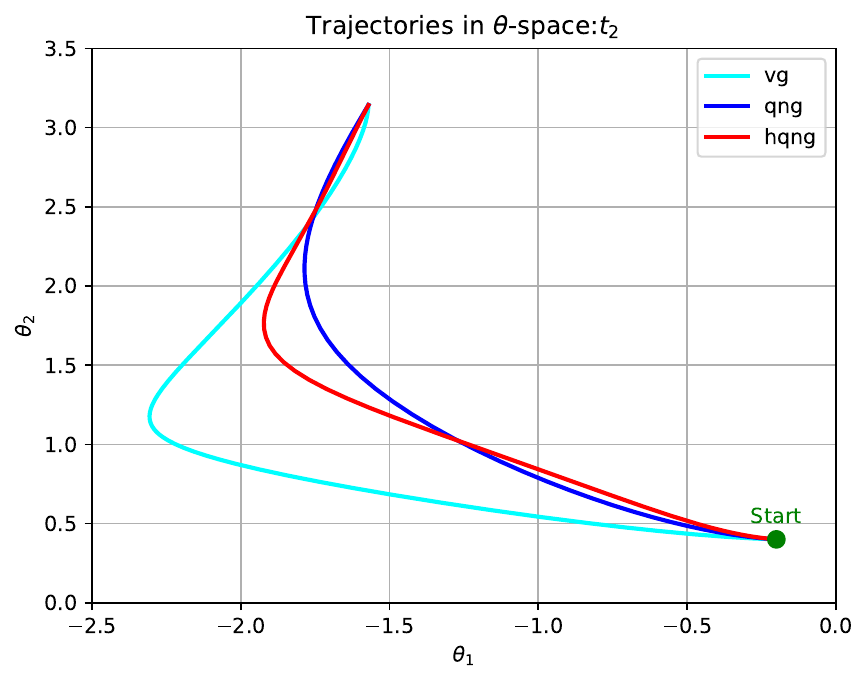}
    \caption{}
  \end{subfigure}
  \begin{subfigure}[b]{0.32\textwidth}
    \includegraphics[width=\textwidth]{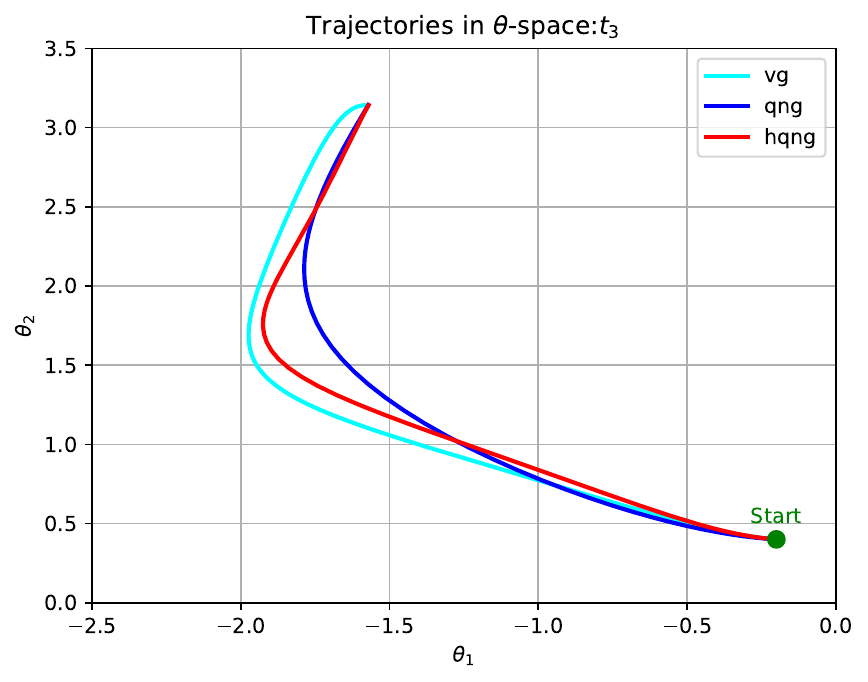}
    \caption{}
  \end{subfigure}
  \caption{Evaluating the reparameterization invariant property for vanilla gradient descent, QNG and H-QNG. The given $2$-qubit Hamiltonian is $H=X_1+Y_1+Z_1+X_2+Y_2+Z_2+X_1X_2+Y_1Y_2+Z_1Z_2$. The experiments are performed using a $2$-parameter VQE circuit composed of Pauli rotation and CNOT gates, evaluated under three optimization methods. The cost function is given by $\tr(\rho(\theta_1,\theta_2)H)$ , where the optimal solution can be analytically obtained as $\theta_1=-\frac{\pi}{2}$ and $\theta_2=\pi$. Optimization is carried out using three different reparameterization functions $t$ in the $\psi$-space where $t_1:(\theta_1,\theta_2)=(0.8\psi_1,1.2\psi_2)$, $t_2:(\theta_1,\theta_2)=(1.2\psi_1,0.8\psi_2)$, $t_3:(\theta_1, \theta_2)=(2\arctan(2\tan(0.5\psi_1)),2\arctan(2\tan(0.5\psi_2)))$. The initial points $\psi^{(0)}$ in $\psi$-space are chosen such that they correspond to the same point in $\theta$-space under each mapping $t$. The trajectories $\{\theta^{(k)}\}$ are obtained by applying each $t$ to the corresponding trajectories in the $\psi$-space. The trajectories of QNG and H-QNG remain invariant in the $\theta$-space (the blue curve and red curve), whereas this invariance does not hold for vanilla gradient descent (the cyan curve).}
  \label{fig-ri}
\end{figure}

\subsection{Singularity of the Metric Tensor}
\label{SMT}
\Cref{df-pullback} requires the function $h$ to be an immersion; otherwise, the pullback metric may be singular. However, this immersion condition does not always hold for variational quantum circuits, where different parameter values may result in the same variational quantum state. In this section, we prove that, in general, the metric tensor $T$ used in H-QNG is almost everywhere non-singular in the parameter space. We begin by formulating the following proposition for standard QNG. The detailed proofs of propositions and theorems in this subsection are shown in \Cref{pf-3}.
\begin{proposition}
    For an $n$-qubit VQE with $m$ parameters and a given Hamiltonian $H=\sum_{r=1}^va_rP_r$, the Fubini-Study metric tensor $A$ used in QNG can only be either:
    \begin{enumerate}
        \item Almost everywhere non-singular for $\theta$ in the parameter space.
        \item Everywhere singular for $\theta$ in the parameter space.
    \end{enumerate}
    \label{pp-sinqng}
\end{proposition}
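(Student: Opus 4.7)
The plan is to exploit the fact that for any standard variational ansatz, the state $\ket{\phi_\theta}$ depends real-analytically on $\theta$, and hence so does every entry of $A$. A circuit built from Pauli rotations $e^{-i\theta_j P_j/2}$ interleaved with fixed unitaries produces amplitudes that are polynomials in $\sin(\theta_j/2)$ and $\cos(\theta_j/2)$, which are real analytic in $\theta\in\mathbb{R}^m$. The partial derivatives $\ket{\partial_i\phi_\theta}$ are therefore analytic, and the Hermitian inner products $\braket{\partial_i\phi_\theta|\partial_j\phi_\theta}$ and $\braket{\partial_i\phi_\theta|\phi_\theta}$ are analytic as well. Taking real parts preserves analyticity, so each entry $A_{ij}(\theta)$ is a real-analytic function on $\mathbb{R}^m$.

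Consequently, $\det A(\theta)$ is a real-analytic function on the (connected) parameter space $\mathbb{R}^m$. The core observation is the classical fact that the zero set of a non-identically-zero real-analytic function on a connected open subset of $\mathbb{R}^m$ has Lebesgue measure zero (indeed, it is contained in a locally finite union of analytic hypersurfaces). Applying this dichotomy to $\det A$ yields exactly the two cases in the statement: either $\det A \equiv 0$, in which case $A(\theta)$ is singular at every $\theta$, or $\det A \not\equiv 0$, in which case its zero set has Lebesgue measure zero and hence $A(\theta)$ is non-singular almost everywhere in the parameter space.

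The main step I would write out carefully is the analyticity claim: I would isolate a single parameter $\theta_i$ and note that the map $\theta_i \mapsto U(\theta) = V_2\, e^{-i\theta_i P_i/2}\, V_1$ (with $V_1, V_2$ the products of the remaining gates) is an entire matrix-valued function, so $\ket{\phi_\theta} = U(\theta)\ket{0}$ and $\ket{\partial_i\phi_\theta} = -\tfrac{i}{2}\,V_2 P_i e^{-i\theta_i P_i/2} V_1\ket{0}$ are analytic in $\theta_i$; iterating across all $\theta_1,\dots,\theta_m$ then gives joint analyticity. From there, the rest is automatic.

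The one point of friction will be justifying that the proposition is being stated for a \emph{connected} parameter domain so that the analytic identity theorem applies globally rather than only component-by-component. Since the natural domain $\mathbb{R}^m$ is connected, this is not a genuine obstacle, but it should be mentioned explicitly to rule out pathological constructions in which the parameter space is a disjoint union, where $A$ could be singular on one component and nondegenerate on another. For the standard VQE setting considered in the paper, the conclusion then holds as stated.
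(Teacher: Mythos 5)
Your proposal is correct and follows essentially the same route as the paper's proof in Appendix~\ref{pf-3}: establish that each entry of $A(\theta)$ is real-analytic (the paper does this via the trigonometric-series structure of expectation values, you do it at the circuit level via $e^{-i\theta_i P_i/2}$ being entire, which amounts to the same thing), then apply the fact that the zero set of a non-identically-zero real-analytic function has Lebesgue measure zero to $\det A$. Your remark about connectedness of the domain is a reasonable point of care that the paper leaves implicit.
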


In the general case of a well-constructed parameterization, the Fubini-Study metric tensor should not be constantly singular. Under this condition, \Cref{pp-sinqng} implies that singularities of the Fubini-Study metric tensor can only occur in a small subspace of measure zero within the entire parameter space. This property guarantees that the standard QNG remains well-defined at almost all points in the parameter space. We also aim for H-QNG to preserve this desirable property. Fortunately, as shown in the following theorem, this is indeed the case for H-QNG.

\begin{theorem}
   For an $n$-qubit VQE with $m$ parameters and a given Hamiltonian $H=\sum_{r=1}^va_rP_r$, the metric tensor $T$ used in H-QNG can only be either:
    \begin{enumerate}
        \item Almost everywhere non-singular for $\theta$ in the parameter space. 
        \item Everywhere singular for $\theta$ in the parameter space.
    \end{enumerate}
    \label{th-sinhqng} 
\end{theorem}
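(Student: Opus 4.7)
}
The plan is to reduce the statement to an analyticity argument about a determinant, mirroring the approach that presumably underlies \Cref{pp-sinqng}. First, I would rewrite the H-QNG metric in Gram-matrix form. By \Cref{df-hqng},
\begin{equation*}
T_{ij}=\frac{1}{2\sqrt{\sum_{r=1}^v a_r^2}}\sum_{r=1}^v a_r^2\,\tr(\partial_i\rho_\theta P_r)\,\tr(\partial_j\rho_\theta P_r)=c\,(\tilde J^{T}\tilde J)_{ij},
\end{equation*}
where $c>0$ is the scalar prefactor and $\tilde J(\theta)\in\mathbb{R}^{v\times m}$ has entries $\tilde J_{ri}(\theta)=a_r\tr(\partial_i\rho_\theta P_r)$. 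Hence $T(\theta)$ is singular if and only if $\det\bigl(\tilde J^{T}\tilde J\bigr)=0$, i.e.\ if and only if $\operatorname{rank}\tilde J(\theta)<m$.

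Next I would exploit real-analyticity. Any variational unitary $U(\theta)$ built from a finite number of Pauli rotations interleaved with parameter-independent gates is (entire) real-analytic in $\theta$, so $\rho_\theta=U(\theta)|0\rangle\langle 0|U^\dagger(\theta)$ and all its partial derivatives depend real-analytically on $\theta$. Consequently, every entry $\tilde J_{ri}$ is real-analytic, and the function
\begin{equation*}
D(\theta):=\det\bigl(\tilde J(\theta)^{T}\tilde J(\theta)\bigr)
\end{equation*}
is a real-analytic function on $\mathbb{R}^m$ (expressible, via the Cauchy--Binet identity, as $\sum_{S}(\det \tilde J_S(\theta))^{2}$ when $v\ge m$, and identically zero when $v<m$).

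I would then invoke the standard fact that a real-analytic function on the connected domain $\mathbb{R}^m$ is either identically zero or its zero set has Lebesgue measure zero. Applying this to $D$ yields the stated dichotomy: either $T(\theta)$ is singular for every $\theta$ (case $D\equiv 0$, which subsumes the trivial situation $v<m$), or $T(\theta)$ is non-singular outside a measure-zero subset of the parameter space. The overall structure is identical to \Cref{pp-sinqng}, the only substantive change being that the relevant Jacobian has $v$ (weighted) rows rather than all $4^n$.

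\textbf{Main obstacle.} The only non-routine point is justifying real-analyticity of $D$ in the generality claimed by the theorem statement. Pauli-rotation ansätze make this immediate, but if one wishes to accommodate more general parameterizations, one must verify that the ansatz depends real-analytically on $\theta$ on a connected domain. Once analyticity is in place, the dichotomy is a direct consequence of the identity theorem and requires no further quantum-specific argument.
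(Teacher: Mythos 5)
Your proposal is correct and follows essentially the same route as the paper: establish that $\det(T(\theta))$ is real-analytic (the paper does this via the trigonometric-series form of the expectation values, you via analyticity of the Pauli-rotation ansatz) and then invoke the fact that a non-identically-zero real-analytic function has a zero set of Lebesgue measure zero. The Gram-matrix/Cauchy--Binet framing is a harmless elaboration that does not change the argument.
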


The specific singularity property that the metric tensor $T$ follows among the two possible ones in \Cref{th-sinhqng} depends on the relative magnitudes of $m$ and $v$, namely the number of parameters in the ansatz and the number of Pauli terms in the Hamiltonian. For the case $m>v$, the function $h$ fails to be an immersion, and the metric tensor $T$ is everywhere singular for $\theta$ in the parameter space. However, the condition $m>v$ is not typical for the application of VQEs on NISQ devices. NISQ quantum devices limit VQEs to shallow quantum circuits \cite{preskill2018quantum, lau2022nisq}, where the number of parameters can only grow moderately with the number of qubits $n$, such as $O(n\operatorname{polylog}n)$ \cite{koczor2022quantum}. In contrast, the number of terms in the Hamiltonian can grow much faster, such as $O(n^4)$ for molecular systems in quantum chemistry \cite{kojima2024orbital, koczor2022quantum}. This typically ensures that $m<v$ in typical applications of VQEs such as quantum chemistry.

In the more general case where $m<v$, the metric tensor $T$ used in H-QNG is non-singular almost everywhere in the parameter space under a good parameterization. This property implies that the metric tensor $T$ is well-defined throughout most of the parameter space, similarly to the case of standard QNG. 

However, \Cref{pp-sinqng} and \Cref{th-sinhqng} only ensure that the metric tensors are non-singular in most regions of the parameter space; they do not guarantee a good condition number. Moreover, the optimization trajectory may still pass through singular points, such as in the case of frustration-free Hamiltonians \cite{sattath2016local}, where the optimal point is singular. For instance, standard QNG uses the Moore–Penrose inverse of the Fubini-Study metric tensor instead of its exact inverse, as proposed in the original paper \cite{stokes2020quantum}. Alternatively, the regularization can also be adopted by adding a scaled identity matrix, i.e. $A+\lambda I$, where $\lambda$ is a tunable hyperparameter. For H-QNG, we adopt the latter method since it is more widely used in recent researches \cite{koczor2022quantum, motta2020determining, van2021measurement, shi2025weighted, halla2025quantum, bergholm2018pennylane}. The choice of $\lambda$ is crucial and requires careful tuning, as illustrated in \Cref{fig-lambda}. The other experimental setting is the same as that in \Cref{fig-scaling}. From the results, we can see that a value of $\lambda$ that is too large can lead to slow convergence, while a value that is too small may fail to adequately regularize the metric tensor, potentially causing oscillations, non-convergence, or early convergence. General strategies for selecting $\lambda$ are discussed in \cite{nocedal1999numerical}.

\begin{figure}[h]
    \centering
    \includegraphics[width=0.45\textwidth]{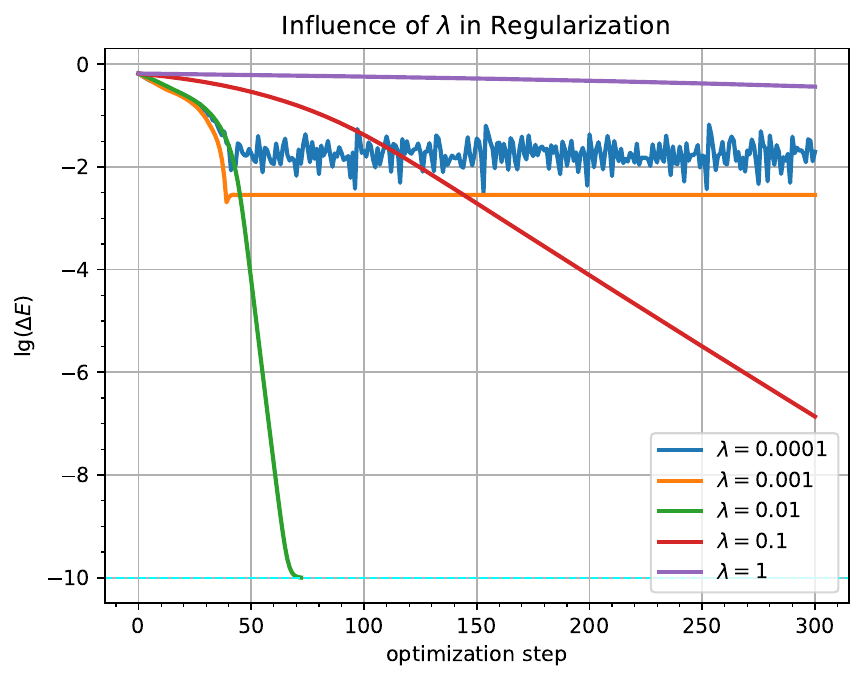}  
    \caption{The influence of $\lambda$ in regularization for H-QNG. The experiment records the energy distance from the ground state $\Delta E$ of a $4$-qubit $H_2$ Hamiltonian using different values of the regularization hyperparameter $\lambda$ in H-QNG. The learning curves are plotted until a precision of $10^{-10}$ (cyan line). A large value of $\lambda$ leads to slow convergence (purple line), while a small $\lambda$ may result in oscillations (blue line) or early convergence (orange line).}
    \label{fig-lambda}
\end{figure}

\subsection{Application Examples} \label{NV}

In this section, we evaluate H-QNG against standard QNG and vanilla gradient descent (VG) as baseline methods on two molecular Hamiltonian examples: the $12$-qubit LiH Hamiltonian and the $12$-qubit H\textsubscript{6} Hamiltonian. The formulations of the Hamiltonians and the variational circuits used for training are both from the PennyLane Molecules Library \cite{bergholm2018pennylane}. In this formulation, the solution that can achieve chemical accuracy is guaranteed to lie within the search space of the variational circuit. The initial parameters are randomly generated around the optimal solutions provided by PennyLane to evaluate the local convergence properties of all methods. The regularization factor $\lambda$ is set as $0.1$ for H-QNG. We report the achieved energy $E$, and the energy distance $\Delta E$ from the ground-state energy on a logarithmic scale, both averaged over 10 independent runs.

From \Cref{fig-lih} and \Cref{fig-h6}, we observe that H-QNG achieves competitive performance compared to standard QNG for both Hamiltonians, and it even reaches chemical accuracy faster than standard QNG in terms of optimization steps. Moreover, since each optimization step of H-QNG requires fewer shots than standard QNG, it achieves chemical accuracy with even lower overall computational cost. In addition, we note that both H-QNG and standard QNG significantly outperform vanilla gradient descent, which fails to achieve chemical accuracy within the given budget of $100$ optimization steps.

These results highlight the potential advantage of H-QNG: it requires the same quantum computational cost as vanilla gradient descent, while achieving performance that is even superior to standard QNG. We speculate that this performance improvement arises from the observation that the physically relevant states occupy only a small subspace of the full Hilbert space \cite{poulin2011quantum, mcardle2019variational}. This may further suggest that, for ground-state optimization problems, focusing on the optimization in the nontrivial manifold spanned by the target Hamiltonian terms could be more effective than considering the entire quantum state space. In addition, another possible reason for H-QNG’s superior performance is that it explicitly incorporates the coefficients of the Hamiltonian terms, which helps the optimizer better weight the step size along different directions for each parameter in the optimization step.

\begin{figure*}[ht]
    \centering
    \begin{minipage}{0.45\textwidth}
        \centering
        \includegraphics[width=\linewidth]{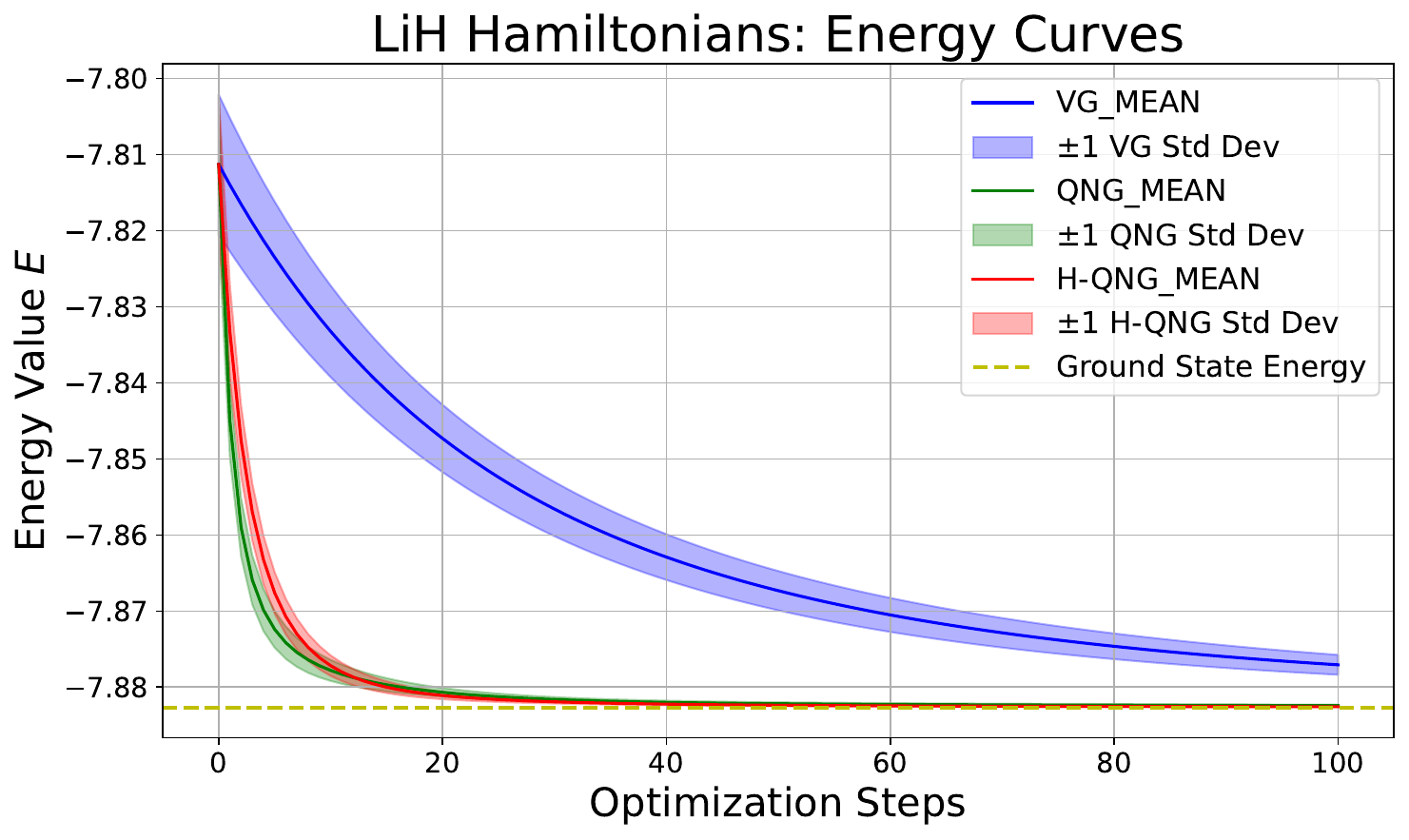}
    \end{minipage}
    \begin{minipage}{0.45\textwidth}
        \centering
        \includegraphics[width=\linewidth]{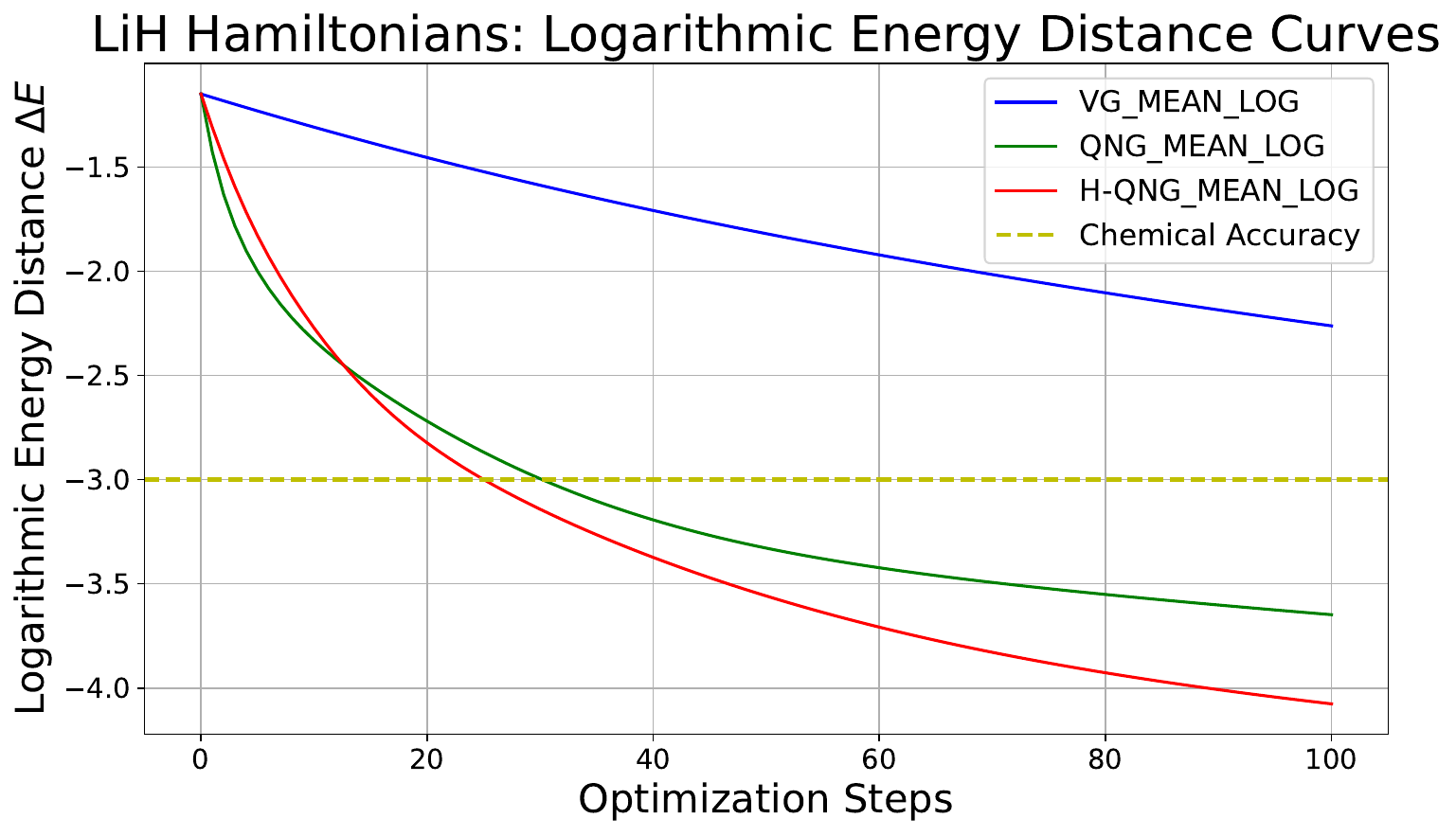}
    \end{minipage}
    \caption{The energy curves (left) and the energy distance curves in logarithmic (right) over $10$ independent runs for LiH Hamiltonian. }
    \label{fig-lih}
\end{figure*}

\begin{figure*}[ht]
    \centering
    \begin{minipage}{0.45\textwidth}
        \centering
        \includegraphics[width=\linewidth]{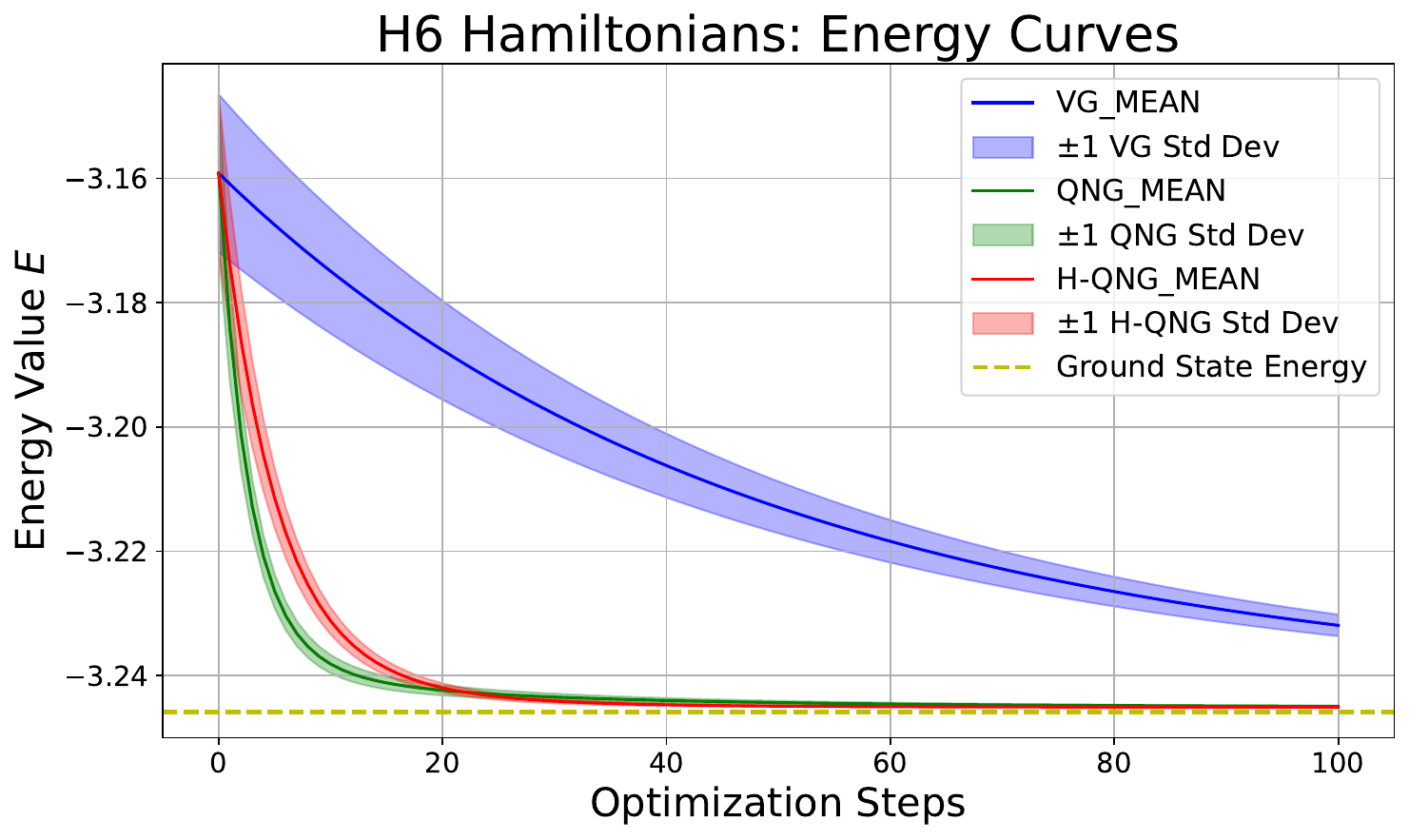}
    \end{minipage}
    \begin{minipage}{0.45\textwidth}
        \centering
        \includegraphics[width=\linewidth]{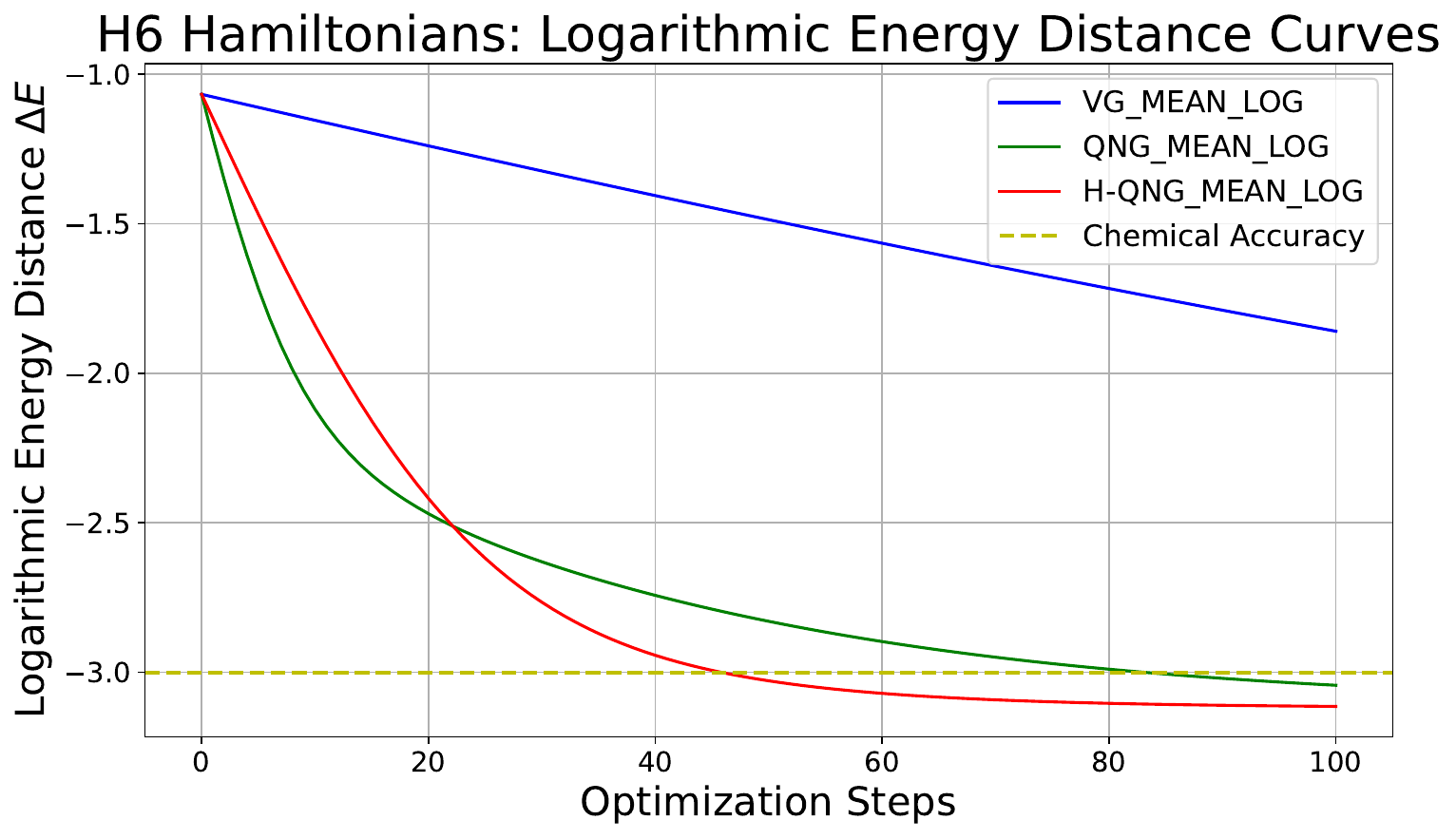}
    \end{minipage}
    \caption{The energy curves (left) and the energy distance curves in logarithmic (right) over $10$ independent runs for H\textsubscript{6} Hamiltonian. }
    \label{fig-h6}
\end{figure*}

\section{Conclusions and Future Works} \label{DC}
In this work, we propose a new optimization method for the Variational Quantum Eigensolver (VQE). Our method is derived from the perspective of a Riemannian pullback metric defined on a manifold that contributes non-trivially to the cost function. Compared to standard QNG, the proposed method takes advantages of the information about the Hamiltonian when computing the metric tensor at each optimization step. The most notable advantage of H-QNG is that it only requires the same quantum computational cost as vanilla gradient descent while achieving comparable or even superior performance to standard QNG. Furthermore, we prove that H-QNG preserves key properties of standard QNG, such as reparameterization invariance.

Our experimental results show that H-QNG not only achieves a faster convergence speed but also requires fewer quantum computational resources. In particular, H-QNG reaches chemical accuracy more quickly than standard QNG for three different molecular Hamiltonians: $\mathrm{H_2}$, $\mathrm{LiH}$, and $\mathrm{H_6}$. These results indicate the potential advantages of employing H-QNG for the optimization of VQEs. We speculate that this advantage arises from H-QNG’s incorporation of information from the Hamiltonian—specifically, the optimization process is considered directly on the non-trivial manifold spanned by Hamiltonian terms, and the Hamiltonian coefficients are also used to weight the step size along each optimization direction.

In this work, we formulate H-QNG on idealized variational quantum circuits, where the required quantities can be evaluated noiselessly. We have also discussed the influence of shot noise in \Cref{NOISE}. However, on NISQ devices, circuit noise can also affect the optimization process. Reference \cite{koczor2022quantum} extends standard QNG to the case with circuit noise by employing the Hilbert–Schmidt metric tensor for mixed states instead of the Fubini-Study metric tensor for pure states. For our method, the formulation in \Cref{df-hqng} and the optimization update in \Cref{eq-hqng} can similarly be extended to mixed states and remain well defined as that in \cite{koczor2022quantum}. However, the performance of H-QNG under circuit noise remains to be evaluated.

This work primarily focuses on optimization for VQEs. However, we note that standard QNG can also be applied to other variational quantum algorithms, such as the variational quantum classifier in the field of quantum machine learning. Therefore, H-QNG is also expected to be applicable to problems beyond VQEs, as long as the cost function is defined via the expectation value of a given Hamiltonian. In such cases, our method is expected to provide similar advantages over standard QNG for the same reasons in VQEs. We consider evaluating the performance of H-QNG on other variational quantum algorithms as a future work.

In conclusion, H-QNG provides a promising and efficient optimization method for VQEs. By considering the manifold that contributes non-trivially to the cost function, H-QNG leverages information from the target Hamiltonian, reduces quantum computational costs, and offers a potential research direction for the optimization of variational quantum algorithms. 

\section*{Acknowledgement}
C.S. and H.W. thank Vedran Dunjko for insightful discussions. C.S. acknowledges the support from the European Union (ERC CoG, BeMAIQuantum, 101124342). This project was also co-funded by the Dutch National Growth Fund (NGF), as part of the Quantum Delta NL programme.

\clearpage
\appendix
\section*{Appendix}
\section{Comparison with OP-VQITE} \label{COPE}
Operator-Projected Variational Quantum Imaginary Time Evolution (OP-VQITE) \cite{anuar2024operator} improves the standard Variational Quantum Imaginary Time Evolution (VQITE) \cite{motta2020determining} by operator projection. For a given Hamiltonian $H$, OP-VQITE formulates the metric tensor $G$ and gradient term $b$ as:
\begin{equation}
    G_{ij} = \sum_{O\in S_{\tau}} \tr(\partial_i \rho_{\theta}O)\tr(\partial_j \rho_{\theta}O)
    \label{eq-opmetric}
\end{equation}
\begin{equation}
    b = \sum_{O\in S_{\tau}} V(\rho_{\theta},O)\nabla_{\theta}\tr(\rho_{\theta}O)
    \label{eq-opgradient}
\end{equation}

where $V(\rho_{\theta},O)=\tr(\rho_{\theta}\{O,H\})-2\tr(\rho_{\theta}H)\tr(\rho_{\theta}O)$ and $S_{\tau}$ is an operator set that spans the Hamiltonian $H$. The corresponding parameter update rule is then formulated as:
\begin{equation}
    \theta^{(k+1)} = \theta^{(k)} - \eta G^{-1}b
    \label{eq-opupdate}
\end{equation}

There can be a plenty of choices of the operator set $S_{\tau}$. In general, H-QNG and OP-VQITE differ both in the metric tensor and in the gradient term.  In a special case, for a Hamiltonian $H=\sum_{r=1}^v a_r P_r$, if the operator set is chosen as $S_{\tau}=\{a_r P_r \mid 1\leq r\leq v\}$, 
then the metric tensor $G$ defined in \Cref{eq-opmetric} for OP-VQITE becomes equivalent to the metric tensor $T$ defined in \Cref{df-hqng} up to a constant prefactor. However, this equivalence of metric tensors still does not render H-QNG equivalent to OP-VQITE,  because their gradient terms differ: the vector $b$ defined in \Cref{eq-opgradient} for OP-VQITE is not the gradient of the cost function. Since OP-VQITE requires estimating the extra expectation value terms in $V(\rho_{\theta},O)$, whereas H-QNG only requires the same estimates as vanilla gradient descent, H-QNG generally has lower shot complexity than OP-VQITE. This is a potential advantage of H-QNG over OP-VQITE.

\section{Impact of Shot Noises} \label{NOISE}

In this work, we mainly consider the ideal case where the circuit is noiseless and the required gradients and metric tensors can be precisely tracked. However, due to NISQ devices, noise can be a great matter for the optimization performance. In our case, a natural follow-up question related to noise is whether the impact of shot noise could become a serious issue for H-QNG, due to the fewer shots used in H-QNG, where both the gradient and the metric tensor are estimated using the same measurement samples. In this section, we conduct additional experiments to show that the influence of shot noise on H-QNG is not worse than that on standard QNG. Moreover, H-QNG still displays its fast convergence advantage in reaching the desired accuracy.

The experimental setup is the same as that in \Cref{SAS} for the $\operatorname{H}_2$ Hamiltonian, except that we now include shot noise. Each element used to compute the gradient and the metric tensor is estimated using $1024$ measurement shots instead of being tracked precisely. Remark that, because standard QNG requires estimating more elements than H-QNG, each optimization step of QNG consequently involves more measurement shots in total. The optimization step budget is increased to $800$ for standard QNG to better display the convergence behavior of it. The learning curves are shown in \Cref{fig-noise}. The cost function values for plotting learning curves are still obtained exactly, in order to better analyze the impact of shot noise on the optimization process itself. 

\begin{figure}[h]
    \centering
    \includegraphics[width=0.45\textwidth]{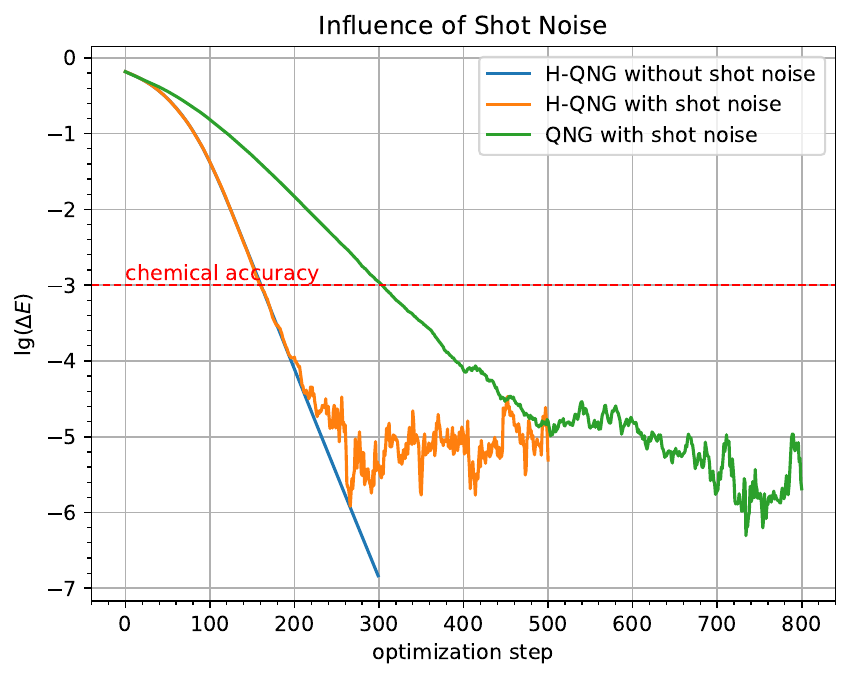}  
    \caption{The learning curves of H-QNG (orange line) and QNG (green line) under shot noise, with $1024$ measurement shots used for each required quantity. H-QNG still reaches chemical accuracy faster than standard QNG in the presence of shot noise. H-QNG behaves similarly to the noiseless case in the early stages but eventually encounters a precision floor due to shot noise. The same precision floor is also observed for standard QNG.}
    \label{fig-noise}
\end{figure}

As shown in the figure, the learning curve of H-QNG with shot noise (orange line) still reaches chemical accuracy faster than standard QNG with shot noise (green line). Moreover, H-QNG with shot noise performs almost identically to the noiseless case (blue line) in the early stages, where the energy gap to the target state is large, but eventually reaches an accuracy floor due to shot noise. A similar accuracy floor also appears for standard QNG, even though it uses more measurement shots in total. These results indicate that shot noise does affect the optimization process; however, with a sufficient number of measurement shots, both methods are only mildly affected before achieving desired accuracy. In addition, H-QNG continues to display an advantage in convergence speed compared to QNG under shot noise.

\section{Proofs in Section \ref{Pre}} \label{cons}

Here we provide a detailed derivation from the constrained optimization problem to the update formulas for vanilla gradient descent and QNG. For vanilla gradient descent, using the method of Lagrange multipliers, the constrained optimization problem given in \Cref{vanilla_problem} can be formulated as:
\begin{equation}
    \delta^{*}=\argmin_{\delta} f(\theta + \delta) + \lambda (\Vert \delta \Vert_2^2-\epsilon)
    \label{eq-vgd1}
\end{equation}

By applying the first-order Taylor expansion to $f(\theta+\delta)$, we obtain:
\begin{equation}
    \delta^{*}\approx\argmin_{\delta} f(\theta)+\nabla f(\theta)^T\delta + \lambda (\Vert \delta \Vert_2^2-\epsilon)
    \label{eq-vgd2}
\end{equation}

Since we are searching the minimum, \Cref{eq-vgd2} should satisfy the Karush–Kuhn–Tucker (KKT) conditions. Here, this simply means that the derivative of the right-hand side with respect to $\delta$ must be zero:
\begin{align}
    0 &= \nabla f(\theta) +2\lambda \delta^* \notag \\
    \delta^* & = -\frac{1}{2\lambda} \nabla f(\theta)
    \label{eq-vgd3}
\end{align}

By considering $\tfrac{1}{2\lambda}$ as the learning rate $\eta$, we derive the exact update rule of vanilla gradient descent in \Cref{eq-vanilla}.

For QNG, the derivation proceeds in a similar way. The constrained optimization problem in \Cref{qng_problem} can be formulated as:
\begin{equation}
    \delta^{*}=\argmin_{\delta} f(\theta + \delta) + \lambda (D_F(\rho_{\theta},\rho_{\theta+\delta})-\epsilon)
    \label{eq-qngd1}
\end{equation}

Note that for pure states, we have $D_F(\rho_{\theta},\rho_{\theta+\delta})=\frac{1}{2}\Vert\rho_{\theta+\delta}- \rho_{\theta}\Vert_2^2$. By applying first-order Taylor expansion to $\rho_{\theta+\delta}$, we have $\rho_{\theta+\delta}\approx\rho_{\theta}+\sum_i\partial_i\rho_{\theta}\delta_i$. Hence, we have $D_F(\rho_{\theta},\rho_{\theta+\delta}) \approx \frac{1}{2}\Vert\sum_i\partial_i\rho_{\theta}\delta_i\Vert_2^2=\frac{1}{2}\sum_{ij}\tr(\partial_i\rho_{\theta}\partial_j\rho_{\theta})\delta_i\delta_j=\delta^TA\delta$, where the matrix $A_{ij}=\frac{1}{2}\tr(\partial_i\rho_{\theta}\partial_j\rho_{\theta})$ is equivalent to the Fubini-Study metric tensor when the state $\rho_{\theta}$ is pure \cite{koczor2022quantum, shi2025weighted}. Substituting this expression into \Cref{eq-qngd1} and applying the first-order Taylor expansion to $f(\theta+\delta)$, we obtain:
\begin{equation}
    \delta^{*}\approx\argmin_{\delta} f(\theta)+\nabla f(\theta)^T\delta + \lambda (\delta^TA\delta-\epsilon)
    \label{eq-qngd2}
\end{equation}

Similarly, by applying KKT-conditions, we obtain:
\begin{align}
    0 &= \nabla f(\theta) +2\lambda A\delta^* \notag \\
    \delta^* & = -\frac{1}{2\lambda} A^{-1}\nabla f(\theta)
    \label{eq-qngd3}
\end{align}
By considering $\frac{1}{2\lambda}$ as the learning rate $\eta$, we derive the exact update rule of QNG in \Cref{df-qng}.

\section{Proofs in Section \ref{MF}} \label{pf-1}
Here we provide a proof of \Cref{pp-qng}. The function $h:\mathcal{S}\rightarrow \mathcal{M}$ in \Cref{pp-qng} maps a parameter $\theta \in R^m$ to a vector $[\tr(\rho_{\theta}P_1),\cdots \tr(\rho_{\theta}P_{4^n})]\in R^{4^{n}}$, and the corresponding function $g:\mathcal{M}\rightarrow \mathbb{R}$ is a linear combination of these $4^n$ elements such that $g\circ h(\theta)=f(\theta)=\tr(\rho_\theta H)$. The main objective is to show that the pullback metric $G$ defined in \Cref{pp-qng} is equal to the Fubini-Study metric tensor $A$ up to a constant factor $2^{n+1}$: $G=2^{n+1}A$. According to \Cref{df-pullback}, we have:
\begin{align}
    G_{ij} &= \langle[\tr(\partial_i\rho P_1),\cdots,\tr(\partial_i P_{4^n})],[\tr(\partial_j\rho P_1),\cdots,\tr(\partial_j P_{4^n})]\rangle \notag \\
    &=\sum_{r=1}^{4^n} \tr(\partial_i\rho P_v)\tr(\partial_j \rho P_v) \notag \\
    &= 2^n\tr((\partial_i \rho \otimes\partial_j \rho)S) \notag \\
    &= 2^n\tr(\partial_i\rho \partial_j\rho) \notag \\
    &= 2^{n+1} A_{ij}
    \label{g2a}
\end{align}

The second equality follows from the fact that $G_{\mathcal{M}}$ is the Euclidean Riemannian metric. The third equality results from the Pauli decomposition of the swap operator. The fourth equality applies the swap trick \cite{mele2024introduction}. The final equality uses the relation between the Hilbert–Schmidt metric tensor $T$ and the Fubini-Study metric tensor $A$, namely $T_{ij}=2A_{ij}$ \cite{koczor2022quantum, shi2025weighted}.

Following the same derivation approach as in \Cref{g2a}, we can obtain the expression for the pullback metric $G$ of H-QNG defined in \Cref{df-hqng}, where $G_{ij} = \sum_r^v a_r^2\tr(\partial_i\rho_{\theta}P_v)\tr(\partial_j\rho_{\theta}P_v)$.

\section{Proofs in Section \ref{RI}} \label{pf-2}
Here we provide a brief proof of \Cref{th-inva}. \Cref{pp-inva} can be proved in a similar way. To prove \Cref{th-inva}, we introduce the following \Cref{lm-ode} for ODEs. 

\begin{lemma}
    For two ODEs $\dot{\theta}=A(\theta)$ and $\dot{\psi}=B(\psi)$ and a diffeomorphism reparameterization $h$ between $\theta$ and $\psi$ where $\theta(0)=h(\psi(0))$, if for any $\theta=h(\psi)$ it holds that $A(\theta)=J_h(\psi)B(\psi)$ where $J_h(\psi)$ is the Jacobian of $h$ at $\psi$, then $\theta(t)=h(\psi(t))$ for all $t\geq0$.
    \label{lm-ode}
\end{lemma}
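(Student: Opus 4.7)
The plan is to prove the lemma by constructing an auxiliary curve from $\psi(t)$ via the diffeomorphism $h$ and then invoking uniqueness of solutions to the initial value problem satisfied by $\theta$. First I would define the candidate curve $\tilde{\theta}(t) := h(\psi(t))$. With this definition, the claim $\theta(t) = h(\psi(t))$ reduces to showing $\tilde{\theta}(t) = \theta(t)$ for all $t \geq 0$.

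The key step is to verify that $\tilde{\theta}$ solves the same initial value problem as $\theta$. By the chain rule,
\begin{equation*}
\dot{\tilde{\theta}}(t) \;=\; J_h(\psi(t))\,\dot{\psi}(t) \;=\; J_h(\psi(t))\,B(\psi(t)),
\end{equation*}
and applying the hypothesis $A(\theta) = J_h(\psi) B(\psi)$ at the point $\theta = \tilde{\theta}(t) = h(\psi(t))$ rewrites the right-hand side as $A(\tilde{\theta}(t))$. Together with the matching initial condition $\tilde{\theta}(0) = h(\psi(0)) = \theta(0)$, this certifies that $\tilde{\theta}$ is a solution of $\dot{x} = A(x)$, $x(0) = \theta(0)$.

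The last step is to invoke the uniqueness theorem for ODEs (Picard--Lindel\"of): since $\theta$ and $\tilde{\theta}$ both solve the same IVP, they must coincide on the interval of existence, giving $\theta(t) = h(\psi(t))$ for all $t \geq 0$ as required.

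The main obstacle, and essentially the only nontrivial point, is ensuring that the hypotheses of the uniqueness theorem (local Lipschitz continuity of $A$, or equivalently of $B$ together with smoothness of $h$) hold in the regime where the lemma will be applied. For the continuous forms \Cref{eq-qngode} and \Cref{eq-hqngode}, this follows from smoothness of the parameterized-state map and of the inverse metric tensor on the open set where the metric is non-singular; singular loci are measure-zero by \Cref{SMT} and are handled by the regularization described there, so within the well-posed regime the uniqueness argument goes through and the lemma follows cleanly.
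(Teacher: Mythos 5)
Your proof is correct, and it is organized differently from the paper's in a way that matters. The paper defines $F(t)=\theta(t)-h(\psi(t))$, computes $\dot F = A(\theta)-J_h(\psi)B(\psi)$, and declares this to be zero by the hypothesis — but the hypothesis $A(\theta)=J_h(\psi)B(\psi)$ is only assumed to hold \emph{when} $\theta=h(\psi)$, i.e., when $F(t)=0$, which is the very thing being proven; as written, that step is circular (it can be repaired, e.g.\ via a Gr\"onwall-type argument or by restructuring exactly as you did). Your version avoids this entirely: you define the pushed-forward curve $\tilde\theta(t)=h(\psi(t))$, for which the identity $\tilde\theta=h(\psi)$ holds by construction, so the hypothesis applies legitimately and the chain rule shows $\tilde\theta$ solves the IVP $\dot x = A(x)$, $x(0)=\theta(0)$; uniqueness then forces $\tilde\theta=\theta$. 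This is the standard and logically clean route. You also correctly flag the one hypothesis both proofs lean on silently — local Lipschitz continuity of $A$ (equivalently, smoothness of $h$ and $B$) so that Picard--Lindel\"of uniqueness applies — and note that in the intended application this holds away from the singular locus of the metric tensor, which is where the lemma is actually used. No gaps.
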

\begin{proof}
    Define $F(t)=\theta(t)-h(\psi(t))$, the derivative of $F$ with respect to $t$ is given by $\dot{F}(t)=\dot{\theta}-J_h(\psi)\dot{\psi}$. Substituting the two ODEs into this expression yields $\dot{F}(t)=A(\theta)-J_h(\psi)B(\psi)=0$. Since $\theta(0)=h(\psi(0))$, we have $F(0)=0$. By the uniqueness of the solution to the ODE, it follows that $F(t)=0$ for all $t\geq 0$, i.e., $\theta(t)=h(\psi(t))$ for all $t \geq 0$.
\end{proof}

According to \Cref{lm-ode}, to prove \Cref{th-inva}, we only need to prove that, for any $\theta=h(\psi)$, it holds that $T_{\theta}^{-1}\nabla_{\theta} f = J_h(\psi)T_{\psi}^{-1}\nabla_{\psi}f$. Here we introduce \Cref{lm-jac}.

\begin{lemma}
    The cost function $f$ and metric tensor $T$ of H-QNG are defined by \Cref{df-hqng}. For a diffeomorphism reparameterization function $h$ between $\theta$ and $\psi$, when $\theta=h(\psi)$, then we have $\nabla_\psi f=J_h^{T}(\psi)\nabla_{\theta} f$ and $T_{\psi}=J_h^{T}(\psi)T_{\theta}J_h(\psi)$.
    \label{lm-jac}
\end{lemma}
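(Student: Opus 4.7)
The plan is to prove the two identities separately, each reducing to a careful application of the chain rule.

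For the gradient identity, I would view $f$ as a function of $\psi$ through the composition $f \circ h$. The ordinary multivariable chain rule then gives $(\nabla_\psi f)_i = \sum_j (\partial \theta_j / \partial \psi_i)(\partial f/\partial \theta_j)$ evaluated at $\theta = h(\psi)$, which in matrix form is exactly $\nabla_\psi f = J_h^T(\psi)\,\nabla_\theta f$. This step is routine.

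For the metric identity, the plan is to exploit the pullback structure directly. Recall from \Cref{df-hqng} (and the computation already performed in the proof of \Cref{pp-qng} via the Jacobian form $G = J_h^T J_h$ of \Cref{df-pullback}) that $T_\theta = c\, J_\phi^T(\theta)\, J_\phi(\theta)$, where $\phi \colon \theta \mapsto (a_1 \tr(\rho_\theta P_1),\ldots,a_v \tr(\rho_\theta P_v))$ is the immersion into $\mathbb{R}^v$ and $c = 1/(2\sqrt{\sum_r a_r^2})$ depends only on the Hamiltonian coefficients. Under the reparameterization $\theta = h(\psi)$, the immersion from $\psi$-space is the composition $\phi \circ h$, whose Jacobian at $\psi$ is $J_\phi(h(\psi))\, J_h(\psi) = J_\phi(\theta)\, J_h(\psi)$ by the matrix chain rule. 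Substituting into the pullback formula yields $T_\psi = c\,[J_\phi(\theta) J_h(\psi)]^T [J_\phi(\theta) J_h(\psi)] = J_h^T(\psi)\,[c\,J_\phi^T(\theta) J_\phi(\theta)]\,J_h(\psi) = J_h^T(\psi)\,T_\theta\,J_h(\psi)$, as desired.

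The only subtlety I anticipate is keeping notation clean about which point each Jacobian is evaluated at, and noting that the prefactor $c$ is genuinely parameterization-independent because it is determined purely by the Hamiltonian data $\{a_r\}$; neither issue is a serious obstacle. Once the lemma is in hand, it plugs directly into the proof of \Cref{th-inva}: combining the two identities gives $T_\psi^{-1} \nabla_\psi f = (J_h^T T_\theta J_h)^{-1}(J_h^T \nabla_\theta f) = J_h^{-1}(\psi)\, T_\theta^{-1}\nabla_\theta f$, so that $J_h(\psi)\, T_\psi^{-1}\nabla_\psi f = T_\theta^{-1}\nabla_\theta f$. Applying \Cref{lm-ode} with $A(\theta) = -T_\theta^{-1}\nabla_\theta f$ and $B(\psi) = -T_\psi^{-1}\nabla_\psi f$ then closes the proof of reparameterization invariance.
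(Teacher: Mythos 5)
Your proposal is correct, and both identities ultimately rest on the same chain-rule argument as the paper's proof; the gradient identity is handled identically. The one genuine difference is in how you organize the metric identity: the paper expands $[T_\psi]_{ij}$ element-by-element, applying the chain rule inside each trace $\tr\bigl(\tfrac{\partial\rho}{\partial\psi_i}P_r\bigr)=\tr\bigl(\tfrac{\partial\rho}{\partial\theta_k}\tfrac{\partial\theta_k}{\partial\psi_i}P_r\bigr)$ and then recollecting the indices into $J_h^T T_\theta J_h$, whereas you exploit the factorization $T_\theta = c\,J_\phi^T J_\phi$ from \Cref{df-pullback} and reduce everything to the single matrix identity $J_{\phi\circ h}(\psi)=J_\phi(\theta)J_h(\psi)$. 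Your packaging is arguably cleaner: it makes the transformation law a one-line consequence of functoriality of the pullback construction, avoids the index bookkeeping entirely, and makes transparent why the prefactor $c=1/(2\sqrt{\sum_r a_r^2})$ plays no role (it depends only on the Hamiltonian coefficients, not on the parameterization). The paper's element-wise route has the minor advantage of not presupposing that the reader has internalized the $G=J^TJ$ form, but the two computations are logically equivalent, and your subsequent use of the lemma to close out \Cref{th-inva} via \Cref{lm-ode} matches the paper exactly.
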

\begin{proof}
    Suppose $\theta,\psi \in \mathbb{R}^m$. We use Einstein summation notation \cite{barr1991einstein} in the derivation. For the first equation to prove, viewing the gradient as $m\times1$ matrix for notation convenience, we have:
    \begin{align*}
        [\nabla_{\psi}f]_{i1} &= \frac{\partial f}{\partial \psi_i} \\
        &=\frac{\partial f}{\partial \theta_j}\frac{\partial \theta_j}{\partial \psi_i} \\
        &= [J_h]^{ji}\frac{\partial f}{\partial \theta_j} \\
        &= [J_h^T]^{ij} [\nabla_{\theta}f]_{j1}
    \end{align*}
    The above equation is equivalent to $\nabla_\psi f=J_h^{T}(\psi)\nabla_{\theta} f$. For the second equation to prove, we have:
    \begin{align*}
        [T_{\psi}]_{ij} &= \frac{1}{2\sqrt{\sum_{r=1}^va_r^2}}\sum_{r=1}^v a_r^2\tr(\frac{\partial \rho}{\partial \psi_i}P_v)\tr(\frac{\partial \rho}{\partial\psi_j}P_v) \\
        &=\frac{1}{2\sqrt{\sum_{r=1}^va_r^2}}\sum_{r=1}^v a_r^2\tr(\frac{\partial \rho}{\partial \theta_k}\frac{\partial \theta_k}{\partial \psi_i}P_v)\tr(\frac{\partial \rho}{\partial\theta_l}\frac{\partial \theta_l}{\partial \psi_j}P_v) \\
        &= \frac{1}{2\sqrt{\sum_{r=1}^va_r^2}} \sum_{r=1}^v a_r^2\tr([J_h]^{ki}\frac{\partial \theta_k}{\partial \psi_i}P_v)\tr([J_h]^{lj}\frac{\partial \theta_l}{\partial \psi_j}P_v) \\
        &= [J^T]^{ik} [T_{\theta}]_{kl}[J]^{lj}
    \end{align*}
    The above equation is equivalent to $T_{\psi}=J_h^{T}(\psi)T_{\theta}J_h(\psi)$.
\end{proof}

Now we can prove $T_{\theta}^{-1}\nabla_{\theta}f=J_h(\psi)T_{\psi}^{-1}\nabla_{\psi}f$ from the LHS:

\begin{align}
    T_{\theta}^{-1}\nabla_{\theta} f &= (J_h^{-T}(\psi)T_{\psi}J_h^{-1}(\psi))^{-1}J_h^{-T}(\psi)\nabla_{\psi}f\\
    &=J_h(\psi)T_{\psi}^{-1}\nabla_{\psi}f
\end{align}

The first equation follows from \Cref{lm-jac}. Now, with this equation and \Cref{lm-ode}, the parameterization invariance of H-QNG is proven. The parameterization invariance of standard QNG can be proven in a similar way.

\section{Proofs in Section \ref{SMT}} \label{pf-3}
We begin by introducing the following lemma, which will be used in the proof presented in this section.
\begin{lemma}
    Let $H(\theta)$ be a real analytic function on $U \subseteq \mathbb{R}^m$. If $H$ is not identically zero, then its zero set $Z(H) \coloneqq \{\theta\in U:F(\theta)=0\}$ has a zero measure, namely $\operatorname{mes}_dZ(H)=0$.
    \label{lm-zero}
\end{lemma}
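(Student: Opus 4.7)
The plan is to induct on the ambient dimension $m$, with the base case supplied by the identity theorem and the inductive step carried out via Fubini together with a contradiction argument against the identity theorem on connected open sets. I will assume without loss of generality that $U$ is connected; the general case follows because an open subset of $\mathbb{R}^m$ has at most countably many connected components.

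For the base case $m=1$, a real analytic function on a connected open subset of $\mathbb{R}$ that is not identically zero has only isolated zeros, by the identity theorem. An at most countable set has Lebesgue measure zero, so $Z(H)$ has measure zero.

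For the inductive step, assume the lemma holds in dimension $m-1$ and let $H\colon U\to\mathbb{R}$ be real analytic on a connected open $U\subseteq\mathbb{R}^m$ with $H\not\equiv 0$. Writing $\theta=(\theta',\theta_m)$, by Fubini it suffices to show that the set $B\coloneqq\{\theta' : H(\theta',\cdot)\equiv 0 \text{ on its slice}\}$ has $(m-1)$-dimensional measure zero, since for $\theta'\notin B$ the base case shows the slice $Z(H)_{\theta'}$ has 1-dimensional measure zero. To bound $B$, pick any open box $V\times W\subseteq U$ with $V\subseteq\mathbb{R}^{m-1}$ and $W\subseteq\mathbb{R}$ open, and fix $c\in W$. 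For every $k\ge 0$ define the real analytic function $\phi_k(\theta')\coloneqq \partial_m^k H(\theta',c)$ on $V$. If $\theta'\in B\cap V$, then the slice function vanishes on $W$ together with all its derivatives, so $\phi_k(\theta')=0$ for every $k$, giving $B\cap V\subseteq \bigcap_k\{\phi_k=0\}$.

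If every $\phi_k$ were identically zero on $V$, the Taylor expansion of $H$ in $\theta_m$ about $c$ would vanish for every $\theta'\in V$, forcing $H\equiv 0$ on a neighborhood of $V\times\{c\}$; the identity theorem on the connected set $U$ would then give $H\equiv 0$, contradicting the hypothesis. Hence some $\phi_{k_0}$ is a nontrivial real analytic function on $V$, and by the inductive hypothesis $\{\phi_{k_0}=0\}$ has measure zero in $V$; therefore so does $B\cap V$. Covering $U$ by countably many such product boxes (possible since $U$ is an open subset of $\mathbb{R}^m$) and taking a countable union concludes $\operatorname{mes}_{m-1}(B)=0$, and hence $\operatorname{mes}_m(Z(H))=0$ by Fubini. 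The main subtlety is the Taylor-series step: one must make sure that vanishing of all $\phi_k$ on an open set of base points does propagate from the single hyperplane $\theta_m=c$ to an open neighborhood in $U$, which is where the real analyticity of $H$ jointly in all variables (not just slice-wise) is essential.
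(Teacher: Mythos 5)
The paper does not actually prove this lemma; it only cites Mityagin, and Mityagin's argument is the same induction-on-dimension-plus-Fubini scheme you use, so your overall strategy is the intended one. Your base case, your control of the set $B$ via the auxiliary analytic functions $\phi_k$, and the Taylor-propagation step you flag as the main subtlety are all sound: since $\phi_k\equiv 0$ on the open set $V$ for every $k$, all mixed partials $\partial_{\theta'}^{\alpha}\partial_m^{k}H$ vanish on $V\times\{c\}$, so joint real analyticity forces $H$ to vanish on an open neighborhood of $V\times\{c\}$ and the identity theorem on the connected set $U$ gives the contradiction.

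There is, however, one genuine gap in the final Fubini step. You conclude that for $\theta'\notin B$ the slice $Z(H)_{\theta'}$ has one-dimensional measure zero ``by the base case,'' but the slice $U_{\theta'}=\{\theta_m:(\theta',\theta_m)\in U\}$ need not be connected even when $U$ is (e.g.\ $U=\mathbb{R}^2\setminus((-\infty,0]\times\{0\})$ is connected yet has slices equal to $\mathbb{R}\setminus\{0\}$), and the base case only controls a real analytic function that is not identically zero on a \emph{connected} open subset of $\mathbb{R}$. For $\theta'\notin B$ the slice function could a priori vanish identically on one component of $U_{\theta'}$ while being nonzero on another, in which case $Z(H)_{\theta'}$ has positive measure and your set $B$ does not capture such $\theta'$. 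The repair is to localize the entire argument, not only the bound on $B$: cover $U$ by countably many boxes $V\times W$, define $B_{V,W}\coloneqq\{\theta'\in V: H(\theta',\cdot)\equiv 0 \text{ on } W\}$, show $\operatorname{mes}_{m-1}(B_{V,W})=0$ exactly as you do (the contradiction still exploits connectedness of all of $U$), apply Fubini on $V\times W$ where the $\theta_m$-slices are intervals, and only then take the countable union over boxes at the level of $Z(H)$. For the paper's application the domain is the full parameter space $\mathbb{R}^m$, whose slices are connected, so your argument goes through verbatim there; but as a proof of the lemma for a general open $U$ it needs this reorganization.
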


The proof and discussion of \Cref{lm-zero} can be found in \cite{mityagin2015zero}. Therefore, to establish \Cref{pp-sinqng} and \Cref{th-sinhqng}, it suffices to show that $\det(A(\theta))$ and $\det(T(\theta))$ are real analytic functions. According to \cite{koczor2022quantum, koczor2022quantum2}, each expectation value $h(\theta)=\tr(\rho_{\theta}P)$ with a Pauli string $P$ forms a trigonometric series. For a circuit of finite depth and width, this series is a finite linear combination of trigonometric functions, and hence is analytic. The determinant of the metric tensor, $T_{ij}=\frac{1}{2\sqrt{\sum_{r=1}^va_r^2}}\sum_{r=1}^{v}a_v^2\tr(\partial_i\rho_{\theta}P_v)\tr(\partial_j\rho_{\theta}P_v)$, is also a finite combination of such analytic functions, and is therefore itself analytic.

Therefore, according to \Cref{lm-zero}, the zero set $\det(T(\theta))=0$ must have measure zero in the parameter space (corresponding to the second case in \Cref{th-sinhqng}), unless the determinant is constantly zero (corresponding to the first case in \Cref{th-sinhqng}). The proof of \Cref{pp-sinqng} follows by the same argument.

\bibliographystyle{unsrt}  
\bibliography{references}
\end{document}